\documentclass[11pt]{article} 

\usepackage[utf8]{inputenc}
\usepackage[T1]{fontenc}
\usepackage[pdftex,left=1in,top=1in,bottom=1in,right=1in]{geometry}

\usepackage{amsmath}
\usepackage{amssymb}
\usepackage{dsfont}
\usepackage{mathrsfs}
\usepackage{amsthm}

\usepackage{graphicx}
\usepackage{caption}
\usepackage{bm}
\usepackage{xcolor}
\usepackage{thm-restate}
\usepackage[colorlinks=true,citecolor=blue,linkcolor=blue]{hyperref}
\hypersetup{
    colorlinks,
    linkcolor={red!50!black},
    citecolor={blue!50!black},
    urlcolor={blue!80!black}
}

\usepackage[nameinlink, noabbrev, capitalize]{cleveref}
\AddToHook{cmd/appendix/before}{\crefalias{section}{appendix}}

\newcommand{\N}{\mathbb{N}}

\newcommand{\R}{\mathbb{R}}
\newcommand{\E}{\mathds{E}}

\newcommand{\cC}{\mathcal{C}}

\newcommand{\eps}{\varepsilon}

\newtheorem{theorem}{Theorem}[section]

\AddToHook{env/proposition/begin}{\crefalias{theorem}{proposition}}

\AddToHook{env/property/begin}{\crefalias{theorem}{property}}

\newtheorem{claim}[theorem]{Claim}

\newtheorem{lemma}[theorem]{Lemma}
\AddToHook{env/lemma/begin}{\crefalias{theorem}{lemma}}

\AddToHook{env/conjecture/begin}{\crefalias{theorem}{conjecture}}

\AddToHook{env/corollary/begin}{\crefalias{theorem}{corollary}}

\AddToHook{env/example/begin}{\crefalias{theorem}{example}}

\newtheorem{definition}[theorem]{Definition}
\AddToHook{env/definition/begin}{\crefalias{theorem}{definition}}

\newtheorem{remark}[theorem]{Remark}
\AddToHook{env/remark/begin}{\crefalias{theorem}{remark}}

\AddToHook{env/question/begin}{\crefalias{theorem}{question}}

\AddToHook{env/condition/begin}{\crefalias{theorem}{condition}}

\newcommand{\Bdel}{B^{\mathrm{del}}}
\newcommand{\Bins}{B^{\mathrm{ins}}}

\newcommand{\bits}{\{0,1\}}

\newcommand{\zo}{\{0,1\}}

\newcommand{\Cins}{C_{\mathrm{ins}}}

\newcommand{\Cdel}{C_{\mathrm{del}}}

\let\originalleft\left
\let\originalright\right
\renewcommand{\left}{\mathopen{}\mathclose\bgroup\originalleft}
\renewcommand{\right}{\aftergroup\egroup\originalright}

\title{The Insertion List-Decoding Capacity \\ and an Improved Bound on the Deletion List-Decoding Capacity}

\allowdisplaybreaks

\author{Roni Con\thanks{School of Electrical \& Computer Engineering, Tel Aviv
University. \texttt{ronicon@tauex.tau.ac.il}.} \and Dean Doron\thanks{Stein Faculty of Computer and Information Science, Ben-Gurion University. \texttt{deand@bgu.ac.il}.} \and João Ribeiro\thanks{Instituto de Telecomunicações and Departamento de Matemática, Instituto Superior Técnico, Universidade de Lisboa. \texttt{jribeiro@tecnico.ulisboa.pt}.}
}

\date{}

\begin{document}
	
\maketitle

\begin{abstract}
    Informally, the capacity of list-decoding in a given adversarial error model is the largest rate at which we can list-decode with list size polynomial in the block length.
    The capacity of list-decoding from insertions and deletions is a basic, yet poorly understood, aspect of coding against synchronization errors.
    For example, when dealing with a $\delta>1/2$ fraction of insertions, the best known lower bounds give little more than the fact that the capacity is positive.
    Beyond that regime we also only have loose bounds, with the lower bounds stemming from the analysis of uniformly random codes. 
    
    We make progress in our understanding of the limits of list-decoding binary codes from insertions and deletions.
    We show that the capacity of list-decoding from a $\delta$-fraction of insertions is exactly 
    \begin{equation*}
        (1+\delta)\left(1-h\left(\frac{\delta}{1+\delta}\right)\right)
    \end{equation*}
    for all $\delta\in[0,1]$, achieved with high probability by a code sampled according to a symmetric $2$-state Markov chain.
    Curiously, we complement this by showing that such an approach does not beat uniformly random coding in list-decoding from deletions.
    We also give an improved upper bound on the capacity of list-decoding from a $\delta$-fraction of deletions, showing in particular that it behaves like $1-h(\delta)$ when $\delta\to 0$.
    This matches the asymptotic behavior of the capacity of the binary deletion channel for vanishing deletion probability.
\end{abstract}

\newpage
\tableofcontents
\newpage

\section{Introduction}\label{sec:intro}

Errors that cause loss of synchronization, such as insertions and deletions, are a long-standing challenge in coding theory, with many basic questions still poorly understood.
On the probabilistic side, we still only know loose bounds on the capacity of channels with memoryless insertions and deletions~\cite{Mit09,CR20}, while the capacity of channels with memoryless bit flips and erasures has been known since the early work of Shannon~\cite{Sha48}.
On the adversarial side, our knowledge of uniquely-decodable and list-decodable codes for insertions and deletions remains primitive when compared to the state of the art for bit flips and erasures, in particular over small alphabets (for several perspectives on this, see the surveys of Sloane~\cite{Slo02}, Mercier, Bhargava, and Tarokh~\cite{MBT10}, and Haeupler and Shahrasbi~\cite{HS21}).

In this work, we study the optimal rate of list-decoding from insertions and deletions, also called the \emph{list-decoding capacity}, over a binary alphabet.
More precisely, for a fixed fraction of errors $\delta$, the list-decoding capacity is the largest number $R^\star$ such that for any $R<R^\star$ there exists a family of binary codes $(\cC_n)_{n\in\N}$, with each $\cC_n\subseteq\bits^n$, of asymptotic rate at least $R$ that are list-decodable against a $\delta$-fraction of errors with list size $L$ upper bounded by a polynomial in the block length $n$.

Determining the list-decoding capacity for bit flips and erasures is an easy exercise.
For example, the capacity of list-decoding from a $\delta$-fraction of erasures is $1-\delta$, achieved with high probability by a uniformly random code, and it exhibits a sharp phase transition: for any $\eps>0$ and $R\leq 1-\delta-\eps$ there is a family of codes that is list-decodable against a $\delta$-fraction of erasures with \emph{constant} list size $L=\Theta(1/\eps)$, while for $R\geq 1-\delta+\eps$ any code of rate $R$ has \emph{exponential} list size $2^{\Omega(\eps n)}$.
An analogous statement holds for the capacity of list-decoding from a $\delta$-fraction of bit flips, which equals $1-h(\delta)$ for $\delta\leq 1/2$ and $0$ for $\delta>1/2$ with $h$ the binary entropy function, again achieved with high probability by a uniformly random code.
Curiously, in both cases the list-decoding capacity coincides with the Shannon capacity of the natural corresponding probabilistic error model.
More precisely, the capacity of list-decoding from bit flips equals the capacity of the binary symmetric channel, and the capacity of list-decoding from erasures equals the capacity of the binary erasure channel.
In fact, we know that this correspondence between list-decoding capacity and channel capacity is more than a mere coincidence for some error models~\cite{PSW25}.

\paragraph{Prior work on the capacity of list-decoding binary codes from insertions and deletions.}
Our understanding of the capacity of list-decoding from insertions and deletions is much more primitive.
Prior work obtained some loose bounds on the capacity, with the lower bounds stemming from analyzing the performance of uniformly random codes.
We discuss the present state of affairs in more detail now.
We first specialize the discussion to the binary alphabet and to correcting solely deletions or solely insertions, which is our setting of interest, and later in \cref{sec:related} point out prior work that extends beyond that.

We say that a code $\cC\subseteq\bits^n$ is \emph{$(\delta,L)$-list-decodable from insertions} if for every $y\in\bits^{(1+\delta)n}$ we have
\begin{equation*}
    |\{c\in\cC:\textrm{ $y$ is a supersequence of $x$}\}|\leq L.
\end{equation*}
Analogously, we say that $\cC\subseteq\bits^n$ is \emph{$(\delta,L)$-list-decodable from deletions} if for every $y\in\bits^{(1-\delta)n}$ we have
\begin{equation*}
    |\{c\in\cC:\textrm{ $y$ is a subsequence of $x$}\}|\leq L.
\end{equation*}
While in the context of \emph{unique decoding} correcting $t$ deletions is equivalent to correcting $t$ insertions, for any number of errors $t$, this does not hold once we move to list-decoding~\cite{HSS18}.

We denote by $\Cins(\delta)$ the capacity of list-decoding from a $\delta$-fraction of insertions, and by $\Cdel(\delta)$ the capacity of list-decoding from a $\delta$-fraction of deletions.
Recall from above that $\Cins(\delta)$ is the largest real number such that for any rate $R<\Cins(\delta)$ and all sufficiently large $n$ there exists a code $\cC\subseteq\bits^n$ that is $(\delta,L)$-list-decodable from insertions with list size $L$ growing at most polynomially with $n$ (the definition of $\Cdel(\delta)$ is analogous).
The known results about the capacity of list-decoding from bit flips and erasures even leaves open the prospect of obtaining list size \emph{independent of $n$} at any rate below capacity. 

It is easy to see that $\Cins(\delta)=0$ for all $\delta\geq 1$ and $\Cdel(\delta)=0$ for all $\delta\geq 1/2$.
However, what happens for smaller $\delta$ remains a basic, yet still poorly understood, question about codes for synchronization errors.

With respect to deletions, initial work by Guruswami and Wang~\cite{GW17} showed that $\Cdel(\delta)>0$ for all $\delta\in[0,1/2]$ with an explicit construction.
This was then improved (non-constructively) by Haeupler, Shahrasbi, and Sudan~\cite[Theorems 1.3 and 1.6]{HSS18}, who showed that
\begin{equation}\label{eq:Cdel-bounds}
    1-h(\delta)\leq \Cdel(\delta) \leq 1-2\delta
\end{equation}
for all $\delta\in[0,1/2]$.
The lower bound is achieved with high probability by a uniformly random code, with constant list size.
It is also not hard to see (by noting that the number of length-$n$ supersequences of any length-$(1-\delta)n$ string is approximately $2^{n h(\delta)}$) that the $1-h(\delta)$ lower bound is optimal \emph{for uniformly random codes}, in the sense that a uniformly random code of rate $1-h(\delta)+\eps$ will have list size $2^{\Omega(\eps n)}$ with high probability.

With respect to insertions, a similar analysis of uniformly random codes gives the lower bound\footnote{A worse lower bound is given in~\cite[Theorem 1.7]{HSS18}, due to the use of a sub-optimal bound on the volume of the largest deletion ball among all $n$-bit strings. Combining their analysis with a sharp bound due to Hirschberg and Régnier~\cite{HR02} yields the lower bound on $\Cins(\delta)$ we state here. For completeness, we provide a full proof in \cref{app:unif-random}, which also shows that the lower bound cannot be improved \emph{for uniformly random codes}.}
\begin{equation}\label{eq:Cins-LB-prior}
    \Cins(\delta)\geq \begin{cases}
        1-h(\delta),& \textrm{ if $\delta\leq 1/2$,}\\
        0, & \textrm{ if $\delta>1/2$,}
    \end{cases}
\end{equation}
and a simple averaging argument (using the fact that the number of length-$(1+\delta)n$ supersequences of any length-$n$ string is approximately $2^{(1+\delta)n h\left(\frac{\delta}{1+\delta}\right)}$) gives the upper bound~\cite[Theorem 1.2]{HSS18}\footnote{The expression in~\cite[Theorem 1.2]{HSS18} differs from the one in \cref{eq:Cins-UB}, but it is easy to see that the two expressions are equivalent.}
\begin{equation}\label{eq:Cins-UB}
    \Cins(\delta)\leq (1+\delta)\left(1-h\left(\frac{\delta}{1+\delta}\right)\right)
\end{equation}
for all $\delta\in[0,1]$.
Similarly to deletions, the lower bound in \cref{eq:Cins-LB-prior} is achieved with high probability by a uniformly random code with constant list size, and cannot be improved by a direct analysis of uniformly random codes (see \cref{app:unif-random} for a more detailed discussion)..\footnote{Liu, Tjuawinata, and Xing~\cite[Corollary 13 and Figure 2]{LTX21} claim an improved lower bound on $\Cins(\delta)$ via an analysis of uniformly random codes. However, as confirmed by the authors, their claim is not correct. More precisely, their Lemma 9 is not correct, which in turn affects their Corollary 13~\cite{LTX26}.
As we discuss in more detail in \cref{app:unif-random}, the lower bound in \cref{eq:Cins-LB-prior} cannot be improved through the analysis of uniformly random codes.}

Given the lower bound in \cref{eq:Cins-LB-prior}, it is natural to wonder whether it is possible to achieve positive rate with small list size for insertion-rate $\delta>1/2$.
The fact that this holds for all $\delta\leq 0.707$ with tiny positive rate can be derived from Johnson-type bounds for codes correcting insertions and deletions applied to the code of Bukh, Guruswami, and H{\aa}stad~\cite{BGH17}.
These were developed by Wachter-Zeh~\cite{Wac18} and Hayashi and Yasunaga~\cite{HY20}, and subsequently improved by Liu, Tjuawinata, and Xing~\cite{LTX23}, and Yang~\cite{Yan26}.
The stronger result that $\Cins(\delta)>0$ \emph{for all $\delta\in[0,1]$} was proved by Guruswami, Haeupler, and Shahrasbi~\cite{GHS21} using an intricate explicit code construction.
However, their positive lower bound on $\Cins(\delta)$ is tiny, and very far from the upper bound in \cref{eq:Cins-UB}.

In sum, all we know about $\Cins(\delta)$ and $\Cdel(\delta)$ are loose bounds, and our understanding of $\Cins(\delta)$ for $\delta>1/2$ is especially poor.

\subsection{Our contributions}\label{sec:contributions}

We determine $\Cins(\delta)$ for all $\delta\in[0,1]$, obtain an improved upper bound on $\Cdel(\delta)$, and reveal a curious dichotomy regarding the performance of ``Markov-based'' random codes for list-decoding from insertions vs.\ from deletions.

\subsubsection{Exact capacity of list-decoding from insertions}\label{sec:contr-ins}

As our first contribution, we determine $\Cins(\delta)$ for all $\delta\in[0,1]$.

\begin{restatable}[{Exact capacity of list-decoding from insertions; see \cref{sec:Cins} for details}]{theorem}{cins}\label{thm:Cins-intro}
     We have
    \begin{equation*}
       \Cins(\delta) = (1+\delta)\left(1-h\left(\frac{\delta}{1+\delta}\right)\right)
    \end{equation*}
    for all $\delta\in[0,1]$.
    More precisely,
    for any $\eps>0$ and block length $n$ there exists a code $\cC\subseteq\bits^n$ of rate $\Cins(\delta)-\eps$ that is $(\delta,L=O(1/\eps))$-list-decodable from insertions, while any code of sufficiently large block length $n$ and rate $\Cins(\delta)+\eps$ will not be $(\delta,L=2^{\eps n/2})$-list-decodable from insertions.
\end{restatable}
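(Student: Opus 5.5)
The plan has two parts: an averaging argument for the upper bound and a random Markov code for the lower bound.

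\textbf{Upper bound.} This is the averaging bound \cref{eq:Cins-UB} of~\cite{HSS18}, made quantitative. Fix a code $\cC\subseteq\bits^n$ of rate $\Cins(\delta)+\eps$. Every $c\in\cC$ has about $2^{(1+\delta)n h(\delta/(1+\delta))}$ supersequences of length $(1+\delta)n$, up to a $2^{o(n)}$ factor; this count is independent of $c$. Double counting pairs $(c,y)$ with $c\sqsubseteq y$ gives some $y\in\bits^{(1+\delta)n}$ with
\[
|\{c\in\cC : c\sqsubseteq y\}|\ge 2^{(1+\delta)n h(\delta/(1+\delta))}\cdot |\cC|\cdot 2^{-(1+\delta)n - o(n)} = 2^{\eps n-o(n)}.
\]
This exceeds $2^{\eps n/2}$ for large $n$.

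\textbf{Lower bound.} Sample $M=2^{(R-\eps)n}$ codewords independently. Each codeword is drawn from the symmetric $2$-state Markov chain: the first bit is uniform, and each later bit repeats its predecessor with probability $r\le 1/2$ and switches with probability $1-r$. The key quantity is
\[
p^\ast:=\max_{y\in\bits^{(1+\delta)n}}\Pr_x[x\sqsubseteq y].
\]
A union bound over $y$ and over $(L+1)$-subsets of codewords shows that the code fails to be $(\delta,L)$-list-decodable with probability at most $2^{(1+\delta)n}\bigl(M p^\ast\bigr)^{L+1}$. Every codeword is typical, with probability $2^{-nh(r)+o(n)}$, so I would set $R=h(r)$ and use that $M p^\ast\le 2^{-\Omega(\eps n)}$ once $p^\ast\lesssim 2^{-nh(r)}$-scale bounds hold. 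Then $L=O(1/\eps)$ suffices. Strictly, I would restrict to typical $x$ or bound $p^\ast$ directly; it suffices to show
\[
\tfrac1n\log(1/p^\ast)\ge h(r)-\Cins(\delta)+\Omega(\eps)\quad\text{for rate } h(r)-\Cins(\delta)...
\]
More cleanly: rate $h(r)-\eps$ is achievable as long as $\tfrac1n\log(1/p^\ast)\ge h(r)-\Cins(\delta)$, together with the $(1+\delta)n$ term from the union bound over $y$.

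\textbf{Bounding $p^\ast$ (the main step).} Use the greedy leftmost embedding: $x\sqsubseteq y$ iff the greedy embedding of $x$ into $y$ uses at most $(1+\delta)n$ symbols. Let $T_i$ be the number of $y$-symbols consumed when matching $x_{i+1}$, given that $x_i$ was matched at position $j$ with $y_j=x_i$. Because the alphabet is binary, exactly one of the two choices ``repeat'' and ``switch'' for $x_{i+1}$ costs $1$, namely the one equal to $y_{j+1}$. The other costs at least $2$.

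Hence for any $\lambda>1$, conditioned on the past,
\[
\E[\lambda^{-T_i}\mid \text{past}]\le (1-r)\lambda^{-1}+r\lambda^{-2},
\]
using $r\le 1/2$: the worst case assigns cost $1$ to the more likely move, ``switch''. Multiplying these conditional bounds along the martingale and applying Markov's inequality gives
\[
p^\ast\le \lambda^{(1+\delta)n}\bigl((1-r)\lambda^{-1}+r\lambda^{-2}\bigr)^{n}.
\]
This is tight for the alternating $y$, where the cost is exactly $1+\mathbb{1}[\text{repeat}]$ and the probability equals $2^{-nD(\delta\|r)+o(n)}$. Optimizing over $\lambda$ gives exactly $p^\ast\le 2^{-nD(\delta\| r)}$.

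The achievable rate is then governed by the requirement
\[
h(r)-\eps\le D(\delta\|r)+h(r)-(1+\delta)\,\cdot\text{const}.
\]
Concretely, I would redo the union bound per $y$ with the $2^{(1+\delta)n}$ factor and optimize over $r\le 1/2$. I expect the optimum to be $r=\delta/(1+\delta)$ or similar, and verifying that the optimized expression equals $(1+\delta)\bigl(1-h(\frac{\delta}{1+\delta})\bigr)$ is a calculus computation.

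\textbf{Expected obstacle.} The main obstacle is getting the union bound over the $2^{(1+\delta)n}$ strings $y$ to balance against $p^\ast$ with no loss. If the naive count of $y$'s is too lossy, I would instead count only the $y$'s that actually occur, grouped by type. Alternatively, I would bound the expected number of $(L+1)$-tuples of codewords sharing a common supersequence using the per-step Chernoff argument above, applied jointly.
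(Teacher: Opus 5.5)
Your upper bound is the paper's averaging argument and is fine. The lower bound, however, has a genuine gap: the chain you picked cannot reach $\Cins(\delta)$, and the rate bookkeeping hides this.

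From your union bound $2^{(1+\delta)n}(Mp^\ast)^{L+1}$, the achievable rate is $\frac1n\log(1/p^\ast)-\frac{1+\delta}{L+1}$. The entropy $h(r)$ of the chain enters only as a ceiling on the number of distinct codewords, so conditions like ``$R=h(r)$'' or your displayed requirement involving $h(r)-\eps$ do not govern the rate.

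Your per-step estimate gives $p^\ast\le 2^{-nD(\delta\|r)}$ for $\delta<r\le 1/2$. As you note yourself, it is tight for the alternating $y$, and that tightness is fatal. It shows the Markov code with $r\le 1/2$ really does have exponentially large lists at every rate above $D(\delta\|r)$. Since $D(\delta\|r)$ is increasing in $r$ on $(\delta,1/2]$, the best you can get is $D(\delta\|1/2)=1-h(\delta)$. That is the uniformly-random-code rate, strictly below $\Cins(\delta)$ for $\delta\in(0,1)$ and zero for $\delta\ge 1/2$.

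Your guessed optimum $r=\delta/(1+\delta)$ is even worse. Since $r<\delta$, a typical codeword has fewer than $\delta n$ repeats, so it embeds into the alternating string of length $(1+\delta)n$, and the list is essentially the whole code. The deferred ``calculus computation'' cannot succeed.

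The missing idea is to bias the chain the other way. The paper uses repeat probability $\alpha=\frac{1+\delta}{2}>\frac12$, which produces long runs that are expensive to embed into alternating strings. It proves $\Pr[X\preceq y]\le\bigl(\alpha^{1+\delta}/(2\alpha-1)^{\delta}\bigr)^n$ for every $y\in\bits^{(1+\delta)n}$.

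Your per-step bound (``the non-matching move costs at least $2$'') is too lossy in this regime; it only yields $D(\delta\|1-\alpha)\le 1-h(\delta)$. The reason is that it lets the adversary make repeats cost $1$ and switches cost only $2$ at every step simultaneously, which no single $y$ can do. A $y$ on which repeats are cheap has long runs, and on those runs switches are costly.

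The paper captures this trade-off by scanning $y$ one bit at a time. It tracks the two moment-generating functions $M_0(\tau,\cdot),M_1(\tau,\cdot)$ of the greedy-match count, conditioned on the current symbol of $X$. Each bit of $y$ multiplies $M=\frac12(M_0+M_1)$ by at most $\max(\alpha e^{\tau},(1-\alpha)e^{\tau}+1)$. This equals $\frac{\alpha}{2\alpha-1}$ at $\tau=\ln\frac{1}{2\alpha-1}$, and Markov's inequality finishes. With $\alpha=\frac{1+\delta}{2}$ this gives $\frac1n\log(1/p^\ast)\ge(1+\delta)\bigl(1-h\bigl(\frac{\delta}{1+\delta}\bigr)\bigr)$.

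Finally, the obstacle you anticipated is not one. The $2^{(1+\delta)n}$ factor from the union bound over $y$ is absorbed by taking $L=O(1/\eps)$, since $p^\ast$ appears to the power $L+1$. What does need a separate (easy) check is that few codewords repeat, which the paper handles via the collision probability of the chain.
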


We emphasize that our result is non-constructive.
\cref{thm:Cins-intro} improves significantly on both the lower bound based on uniformly random codes due to Haeupler, Shahrasbi, and Sudan~\cite{HSS18} for $\delta\leq 1/2$ and the tiny positive lower bound of Guruswami, Haeupler, and Shahrasbi~\cite{GHS21} for $\delta>1/2$.
Furthermore, the constant list size guaranteed at rates below $\Cins(\delta)$ also improves on the explicit construction of~\cite{GHS21}, which had list size growing very slowly with the block length $n$.

Note that the expression for $\Cins(\delta)$ in \cref{thm:Cins-intro} equals the averaging-based upper bound from \cite{HSS18} in \cref{eq:Cins-UB}.
We obtain the matching lower bound by analyzing the performance of random codes sampled \emph{with memory}.
More precisely, we consider sampling codewords independently via a symmetric order-$1$ Markov chain $X_1,X_2,\dots,X_n$ where $\Pr[X_1=1]=1/2$ and, for $i>1$, $\Pr[X_i= X_{i-1}]=\alpha$ with a carefully chosen $\alpha\in[0,1]$.

Sampling codewords with memory when dealing with synchronization errors is natural, and was inspired by prior approaches that led to improved lower bounds on the capacity of the deletion channel.
In more detail, uniformly random coding initially gave a baseline $1-h(\delta)$ lower bound on the capacity of the deletion channel with deletion probability $\delta$~\cite{DG06}, which can be significantly improved upon by considering codebooks sampled via low-order Markov chains~\cite{DG06,DM06}.
In the context of adversarial errors, codes sampled based on order-$1$ Markov chains have also been used to obtain lower bounds on the \emph{zero-error capacity threshold}\footnote{The zero-error capacity threshold for deletions is the largest $\delta^\star$ such that for all $\delta<\delta^\star$ there exists a positive-rate code that is uniquely decodable from a $\delta$-fraction of deletions. Currently, we know that $\sqrt{2}-1\leq \delta^\star<1/2$~\cite{BGH17,GHL22}.} for adversarial deletions~\cite{KMTU11}.

\subsubsection{Capacity of list-decoding from deletions}\label{sec:contr-del}

\paragraph{Markov codes for list-decoding from deletions?}
Given that the capacity of list-decoding from insertions is achieved by a code sampled via an order-$1$ Markov chain, it is natural to ask whether such an approach can improve on the state-of-the-art $1-h(\delta)$ lower bound on $\Cdel(\delta)$ obtained by considering a uniformly random code, or even determine $\Cdel(\delta)$ exactly.
Surprisingly, we show that the answer is negative: order-$1$ Markov codes never do better than uniformly random codes!
More precisely, we use large deviations techniques to show the following.\footnote{It is also natural to ask whether an improvement over uniformly random codes can be attained by combining an order-$1$ Markov code with some code expurgation techniques. We have tried a few approaches without success, and leave it as an interesting direction for future work.}

\begin{restatable}[{Markov codes are no better than random for list-decoding from deletions; see \cref{sec:Cdel-markov} for details}]{theorem}{markov}\label{thm:Cdel-markov}
    Fix a fraction of deletions $\delta\in[0,1/2]$ and $R=1-h(\delta)+\eps$.
    Suppose that $\cC\subseteq\bits^n$ is obtained by independently sampling $2^{Rn}$ codewords according to an order-$1$ Markov chain $X_1,\dots,X_n$ with each $X_i\in\bits$, $\Pr[X_1=1]=1/2$, and $\Pr[X_i=X_{i-1}]=\alpha$ for some $\alpha\in[0,1]$, and keeping repeated codewords.
    Then, with high probability, there exists $y\in\bits^{(1-\delta)n}$ such that
    \begin{equation*}
        |\{c\in\cC:\textrm{ $y$ is a subsequence of $c$}\}|\geq 2^{\Omega(\eps n)}.
    \end{equation*}
\end{restatable}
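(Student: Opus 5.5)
The plan is to reduce the statement to a single-string estimate and then settle that estimate by a large-deviations computation. Set $m=(1-\delta)n$. For a fixed $y\in\bits^{m}$, let $p_\alpha(y)=\Pr[y \text{ is a subsequence of } X]$, where $X=X_1\cdots X_n$ is the Markov chain. The codewords are i.i.d. and repeats are kept, so $N_y=|\{c\in\cC: y\preceq c\}|$ is $\mathrm{Bin}(2^{Rn},p_\alpha(y))$. The key claim is:
\[
\text{for every } \alpha\in[0,1] \text{ there is } y^{(\alpha)}\in\bits^m \text{ with } p_\alpha\bigl(y^{(\alpha)}\bigr)\geq 2^{-(1-h(\delta))n-o(n)}.
\]
Granting the claim, $\E[N_{y^{(\alpha)}}]\ge 2^{\eps n-o(n)}$. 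A Chernoff bound then gives $N_{y^{(\alpha)}}\ge 2^{\eps n/2}$ except with probability $\exp(-2^{\Omega(\eps n)})$, which proves the theorem. The endpoint $\alpha=1$ is trivial: the chain is constant, so $y=0^m$ has $p_1(y)=1/2$. The endpoint $\alpha=0$ is also easy, by taking $y$ alternating.

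To prove the claim I would compute $p_\alpha(y)$ through the greedy embedding. Scan $X$ and match $y_1,y_2,\dots$ at the earliest possible positions. Once $y_j$ has been matched, the current chain state equals $y_j$, and by the Markov property the waiting time $G_j$ until the next occurrence of $y_{j+1}$ is independent of the past. Moreover, $G_j$ is geometric with success probability $\alpha$ if $y_{j+1}=y_j$ and $1-\alpha$ if $y_{j+1}\neq y_j$. The first match has $G_0$ with success probability $1/2$ up to a bounded factor. Since $y\preceq X$ exactly when $\sum_j G_j\le n$, we get
\[
p_\alpha(y)=\Pr\Bigl[\textstyle\sum_{j} G_j\le n\Bigr],
\]
a sum of independent geometrics. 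This sum depends on $y$ only through the number $sm$ of indices with $y_{j+1}=y_j$. Cramér's theorem (or a direct count of gap patterns) then gives
\[
p_\alpha(y)=2^{-nI(s,\alpha)+o(n)},
\]
where $I(s,\alpha)$ is an explicit variational expression: a supremum over $\lambda\le 0$ of $-\lambda-(1-\delta)\bigl[s\Lambda_\alpha(\lambda)+(1-s)\Lambda_{1-\alpha}(\lambda)\bigr]$, with $\Lambda_p$ the log-mgf of a $\mathrm{Geom}(p)$. When $(1-\delta)\bigl(s/\alpha+(1-s)/(1-\alpha)\bigr)\le 1$ the mean is already at most $n$, so $I(s,\alpha)=0$.

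The last step is to show $\min_{s\in[0,1]} I(s,\alpha)\le 1-h(\delta)$ for every $\alpha$. At $\alpha=1/2$ all gaps are $\mathrm{Geom}(1/2)$, and the count $\binom{n}{m}2^{-n}$ gives exactly $1-h(\delta)$, which is a good sanity check. For general $\alpha$ I would first try the natural choice $s=\alpha$, i.e.\ $y$ with the same transition statistics as the chain. A cleaner route is to lower bound $p_\alpha(y)$ directly by counting strings. Fix $y$ with $sm$ repeats, fix a gap profile (how many gaps of each length follow a repeat versus a non-repeat), and sum the Markov probability of the corresponding strings $X$. This turns the exponent into an entropy maximization that should be solvable in closed form, and one then chooses $s$ to optimize.

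I expect this final optimization to be the main obstacle. It requires proving an inequality that holds uniformly in $\alpha$, presumably with equality only at $\alpha=1/2$, and the optimal $s$ may not be $\alpha$. If a closed form is out of reach, I would fall back on convexity of the rate function in the pair $(\lambda,s)$. Plugging in a well-chosen tilt, e.g.\ the one that makes the tilted gaps of mean $1/(1-\delta)$ on average, should then yield an explicit upper bound on $I$ that can be compared with $1-h(\delta)$ by calculus.
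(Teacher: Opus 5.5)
Your overall architecture matches the paper's: greedy embedding into independent waiting times, Cram\'er's theorem for the lower tail, then a Chernoff bound on the binomial count $N_y$. However, there is a genuine gap, in two places.

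First, your gap law is wrong in the repeat case. If the chain has just matched $y_j=0$ and $y_{j+1}=0$, the next bit is $0$ with probability $\alpha$. Otherwise the chain moves to $1$ and must then wait a $\mathrm{Geom}(1-\alpha)$ time to return. So $\Pr[G_j=1]=\alpha$ and $\Pr[G_j=i]=(1-\alpha)^2\alpha^{i-2}$ for $i\ge 2$, with mean $2$ for every $\alpha$. It is not $\mathrm{Geom}(\alpha)$ with mean $1/\alpha$; only the non-repeat gaps are correctly $\mathrm{Geom}(1-\alpha)$. The error matters. For $\alpha>1/2$ one has $\Pr[G_j>k]=(1-\alpha)\alpha^{k-1}\ge(1-\alpha)^k$, so your model overestimates $p_\alpha(y)$, which is exactly the wrong direction for a lower bound. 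Concretely, your formula gives $I(1,\alpha)=0$ for $y=0^m$ whenever $\alpha\ge 1-\delta$. Yet $0^m\preceq X$ forces $X$ to contain at least $(1-\delta)n>n/2$ zeros, an event of probability $2^{-\Omega(n)}$ for any fixed $\alpha<1$ and $\delta<1/2$. So for $\alpha>1/2$, nothing derived from your variational formula establishes the key claim.

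Second, the step you flag as the main obstacle, $\min_s I(s,\alpha)\le 1-h(\delta)$ for all $\alpha$, is the entire content of the theorem. The proposal does not prove it, and gives no reason why $s=\alpha$ should work. The paper avoids any optimization over $s$ by using only the two extreme strings.

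\begin{itemize}
\item For $\alpha<1/2$ it takes $y$ alternating ($s=0$), so all gaps are $\mathrm{Geom}(1-\alpha)$ and the Cram\'er rate is explicit: $C^\star(\alpha,\delta)=\ln(\delta/\alpha)-(1-\delta)\ln\frac{\delta(1-\alpha)}{(1-\delta)\alpha}$ for $\delta<\alpha$, and $0$ otherwise. Since $\partial_\alpha C^\star=\frac{\alpha-\delta}{\alpha(1-\alpha)}>0$ on $(\delta,1/2)$, the rate is below its value $(1-h(\delta))\ln 2$ at $\alpha=1/2$.
\item For $\alpha\ge 1/2$ it takes $y=0^m$ ($s=1$) with the corrected gap law. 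Its moment-generating function is $M_\alpha(\gamma)=\frac{\alpha-e^\gamma(2\alpha-1)}{e^{-\gamma}-\alpha}$ for $\gamma<\ln(1/\alpha)$. For each fixed $\gamma$ the objective $\gamma-(1-\delta)\ln M_\alpha(\gamma)$ is non-increasing in $\alpha$, since its $\alpha$-derivative is $-(1-\delta)(e^\gamma-1)^2/[(1-\alpha e^\gamma)(\alpha-(2\alpha-1)e^\gamma)]\le 0$. Hence the rate is at most its value $(1-h(\delta))\ln 2$ at $\alpha=1/2$.
\end{itemize}

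This pointwise comparison of Cram\'er objectives is genuinely needed. For $\alpha>1/2$ the corrected repeat gap is not stochastically dominated by $\mathrm{Geom}(1/2)$, because its tail $(1-\alpha)\alpha^{k-1}$ eventually exceeds $2^{-k}$. So no simple coupling argument can replace it.
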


We believe that this reveals an interesting dichotomy between the problems of list-decoding from insertions and from deletions, and also between list-decoding from deletions and the capacity of the deletion channel, since order-$1$ Markov codes yield significantly improved capacity lower bounds compared to uniformly random coding.

\paragraph{A sharp upper bound for small fraction of deletions.}
To complement our negative result about order-$1$ Markov codes for list-decoding from deletions, we obtain improved upper bounds on $\Cdel(\delta)$.
In particular, we obtain an upper bound that is asymptotically sharp as $\delta\to 0$.

\begin{restatable}[{Sharp upper bound for small fraction of deletions; see \cref{sec:Cdel-UB-small} for details}]{theorem}{smalldel}\label{thm:Cdel-asymp-UB-intro}
    Fix any $\varepsilon>0$ and $\delta \in (0,1/20)$. 
    Let $\cC\subseteq\bits^n$ be an arbitrary code of rate
    \[
        R \geq 1 - \delta - \left(\gamma - \delta \right) h\left( \frac{\delta}{\gamma - \delta} \right) +\eps
    \]
    with $\gamma=\frac{1}{2}-\sqrt{\frac{h(\delta)\ln 2}{2}}$.
    Then, $\cC$ has list size $2^{\Omega(\eps n)}$ from a $\delta$-fraction of deletions.
    In particular,
    \begin{equation*}
        \Cdel(\delta)\leq 1 - \delta - \left(\gamma - \delta \right) h\left( \frac{\delta}{\gamma - \delta} \right)= 1-h(\delta)+o(\delta).
    \end{equation*}
\end{restatable}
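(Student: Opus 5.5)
Our plan is an averaging argument, restricted to codewords with many runs. The key quantity is $\gamma$: by Hoeffding, a uniformly random string has fewer than $\gamma n$ runs with probability about $2^{-n h(\delta)}$. We will exploit this as follows.

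\textbf{Step 1 (few-run codewords are negligible).} Split $\cC$ according to the number of runs $r(c)$. A codeword with $r(c)\le \gamma n$ has at most $\gamma n$ run boundaries, i.e.\ at most $\gamma n$ positions with $c_i\neq c_{i+1}$. By Hoeffding's inequality, the number of such strings is at most
\[
2\cdot 2^{n}e^{-2n(1/2-\gamma)^2}=2^{n(1-h(\delta))+1},
\]
using $2(1/2-\gamma)^2=h(\delta)\ln 2$. The target rate $R$ is at least $1-h(\delta)+\eps$, because $(\gamma-\delta)h\!\left(\frac{\delta}{\gamma-\delta}\right)+\delta\le h(\delta)$ (we will check this for $\delta<1/20$, which is also needed to make $\gamma>2\delta$). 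Hence, after discarding few-run codewords, a subcode $\cC'$ with $|\cC'|\ge |\cC|/2$ remains, and every $c\in\cC'$ has more than $\gamma n$ runs.

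\textbf{Step 2 (many runs give many distinct subsequences).} Fix $c$ with $r\ge \gamma n$ runs. We lower bound the number of distinct length-$(1-\delta)n$ subsequences of $c$. Choose $\delta n$ runs and delete one bit from each, taking care that a length-one run is never deleted entirely, so that the resulting strings are distinct. Distinctness is clear if all chosen runs have length at least $2$. In general we use the standard construction: pick a set of $\delta n$ runs among $r$ with no two adjacent in a controlled sense, so that each deletion pattern is recoverable from the output. Counting these patterns should give at least
\[
\binom{r-\delta n}{\delta n}\ \ge\ 2^{(\gamma-\delta)n\, h\left(\frac{\delta}{\gamma-\delta}\right)-o(n)}
\]
distinct subsequences. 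The form of the exponent in the theorem statement suggests exactly this bound: choose $\delta n$ items among $(\gamma-\delta)n$, the "gap" choices from a Hirschberg-type bound. As a fallback we will invoke the known lower bound of Hirschberg on the deletion ball, $|D_t(x)|\ge\binom{r-t+1}{t}$ for a string with $r$ runs, which yields this directly.

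\textbf{Step 3 (averaging).} Sum over $c\in\cC'$ the number of its length-$(1-\delta)n$ subsequences, and divide by $2^{(1-\delta)n}$. Some $y$ is then a subsequence of at least
\[
|\cC'|\,2^{(\gamma-\delta)n h(\cdot)-o(n)}\,2^{-(1-\delta)n}\ \ge\ 2^{\eps n-o(n)}
\]
codewords. This gives list size $2^{\Omega(\eps n)}$. Finally, we confirm the asymptotics $1-\delta-(\gamma-\delta)h\!\left(\frac{\delta}{\gamma-\delta}\right)=1-h(\delta)+o(\delta)$. Since $\gamma\to 1/2$, the leading term is $\delta\log\frac{1}{\delta}$, and it matches $h(\delta)$ to first order.

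\textbf{Main obstacle.} We expect the main difficulty to be Step 2: establishing the run-based lower bound on the deletion ball with the precise binomial $\binom{r-\delta n}{\delta n}$, and handling the distinctness of the resulting subsequences. The numerical verification in Step 1 that few-run codewords are a negligible fraction for $\delta<1/20$ is secondary.
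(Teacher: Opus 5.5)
Your proposal is correct and follows essentially the same route as the paper. The paper also discards codewords with at most $\gamma n$ runs; your Hoeffding count is the same estimate it uses in the form $h(\gamma)<1-h(\delta)$ via $h(1/2-x)<1-2x^2/\ln 2$. It then lower-bounds each remaining deletion ball by the run-based bound $\binom{r-\delta n+1}{\delta n}$, which is Levenshtein's lemma, so your Step 2 fallback is exactly the paper's step. Finally it pigeonholes over $y\in\bits^{(1-\delta)n}$.

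The one real difference is the inequality you defer, $\delta+(\gamma-\delta)h\bigl(\frac{\delta}{\gamma-\delta}\bigr)\le h(\delta)$. The paper obtains it indirectly, by contradiction with the known bound $\Cdel(\delta)\ge 1-h(\delta)$. It also follows directly: $m\,h(\delta/m)$ is increasing in $m$, $\gamma-\delta<1/2$, and $\binom{n/2}{\delta n}2^{\delta n}\le\binom{n}{\delta n}$.

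For the final $o(\delta)$ claim, matching the $\delta\log\frac{1}{\delta}$ term is not enough. The linear terms must also cancel, and this uses $\log(\gamma-\delta)\to -1$.
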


\cref{thm:Cdel-asymp-UB-intro} and \cref{eq:Cdel-bounds} together imply that
\begin{equation*}
    \Cdel(\delta) = 1-h(\delta) + o(\delta).
\end{equation*}
This result is interesting for two reasons.
First, this matches (up to low-order terms) the asymptotic behavior of the capacity of the deletion channel for small deletion probability~\cite{KMS10,KM13}.
Second, it implies a separation between $\Cdel(\delta)$ and $\Cins(\delta)$ in the low-noise regime $\delta\approx 0$, since it is not hard to see from \cref{thm:Cins-intro} that $\Cins(\delta)=1-h(\delta)+\delta+O(\delta^2)$.

\subsection{Additional related work}\label{sec:related}

In \cref{sec:intro} we mostly discussed work directly related to our focus on list-decoding solely from deletions or solely from insertions, for binary codes.
Here, we discuss additional related work beyond that setting.

Prior work on list-decoding from insertions and deletions has also considered non-binary alphabets and combinations of insertions and deletions, including both work on the list-decoding capacity~\cite{HSS18,GHS21,LTX21,HS22} and on Johnson-type bounds~\cite{Wac18,HY20,LTX23,Yan26}.
In particular, Guruswami, Haeupler, and Shahrasbi~\cite{GHS21} showed, with an explicit construction, that positive rate is achievable with small list size against a $\delta_{\textrm{del}}$-fraction of deletions and a $\delta_{\textrm{ins}}$-fraction of insertions if and only if $2\delta_{\textrm{del}}+\delta_{\textrm{ins}}<1$.
Improved tradeoffs between rate and error fraction were obtained by Haeupler and Shahrasbi~\cite{HS22}.
In the insertions-only ($\delta_{\textrm{del}}=0$) and deletions-only ($\delta_{\textrm{ins}}=0$) cases considered in our work the state-of-the-art bounds prior to this work remained those of \cref{eq:Cdel-bounds,eq:Cins-LB-prior,eq:Cins-UB}.

The optimal rate for \emph{unique decoding} from insertions and deletions has also received attention, and remains an important open research direction with only loose bounds known under a constant fraction of errors.
As mentioned above, we do not even know the zero-rate threshold for deletions~\cite{KMTU11,BGH17,GHL22}.
A rate lower bound was first given by Levenshtein~\cite{Lev66}, and we know that we can achieve positive rate for any $\delta<\sqrt{2}-1\approx 0.414$ fraction of deletions.
Rate upper bounds (both asymptotic and non-asymptotic) were obtained by Kulkarni and Kiyavash~\cite{KK13}. Later, Yasunaga~\cite{Yas24} obtained asymptotic upper bounds improving on a direct application of the Elias and MRRW bounds for the Hamming metric to the Levenshtein metric.

\section{Preliminaries}

\subsection{Basic notation}

Sets and random variables are both often written using uppercase roman letters, and are easy to distinguish given the context.
We write $\log$ for the base-$2$ logarithm and $\ln$ for the natural logarithm.
For an integer $n\geq 1$, we define $[n]=\{1,2,\dots,n\}$.

A \emph{substring} of a string $x\in\bits^n$ is obtained by taking consecutive symbols from $x$. More precisely, $y\in\bits^\ell$ is a substring of $x\in\bits^n$ if there exists an index $i$ such that $y=x_i x_{i+1}\dots x_{i+\ell-1}$.
A \emph{subsequence} of a string of $x$ is obtained by deleting some (possibly none) of the symbols in $x$.
More precisely, $y\in\bits^\ell$ is a subsequence of $x\in\bits^n$ if there exist indices $1\leq i_1<i_2<\cdots<i_\ell\leq n$ such that $y=x_{i_1} x_{i_2}\dots x_{i_\ell}$.
We also say that $x$ is a \emph{supersequence} of $y$.
Note that every substring is also a subsequence, but not every subsequence is a substring.
We use the notation $y \preceq x$ to mean that $y$ is a subsequence of $x$ (and, equivalently, that $x$ is a supersequence of $y$).

A \emph{run} in a string $x$ is a single-symbol substring of $x$ of maximal length. Every string can be uniquely written as the concatenation of its runs.  
For example, if $x=0111001$, then $x$ is decomposed into $0 \circ 111\circ 00 \circ 1$ where the symbol $\circ$ denotes concatenation of two strings. For a string $x$, we denote by $r(x)$ the number of runs of $x$.

\subsection{Deletion and insertion balls}

We define the \emph{$t$-deletion ball} centered at $y\in\bits^{n}$ by 
\begin{equation*}
    \Bdel_{t}(y)=\{x\in\bits^{|y|-t}: x \preceq y\}.
\end{equation*}
The central challenge in establishing rate vs.\ error-correction trade-offs for insertion- and deletion-correcting codes lies in the irregular size of deletion balls, which varies with the specific structure of the center string. 
Levenshtein~\cite{Lev66} showed that the size of the deletion ball of a string $x$ can be both upper and lower bounded in terms of the number of runs in $x$.
Improved lower bounds were obtained by Calabi and Hartnett~\cite{CH69}, Hirschberg and Régnier~\cite{HR02}, and Liron and Langberg~\cite{LL15}. 
For our result on the capacity of list-decoding from deletions the following bound of Levenshtein will suffice.
\begin{lemma}[{\cite[Equation (1)]{Lev66}}] \label{lem:lev-bound}
    For $y\in \bits^n$, it holds
    \[
    \binom{r(y) - t + 1}{t}\leq |\Bdel_{t}(y)| \leq \binom{r(y) + t - 1}{t}.
    \]
\end{lemma}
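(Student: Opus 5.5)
We prove the two inequalities separately, by induction on $t$, using the standard recursive structure of subsequences. Write $y=y_1\circ y'$, where $y_1$ is the first run of $y$, say $a^{k}$ with $k\ge 1$, and $y'$ is the remainder, so that $r(y')=r(y)-1$. For $D_t(y):=|\Bdel_t(y)|$, the basic recursion comes from classifying a subsequence $x\in\Bdel_t(y)$ by how many symbols of the first run it keeps, with the embedding chosen greedily (leftmost). If $x$ begins with exactly $j\ge 1$ copies of $a$ matched inside the first run, the remaining part of $x$ is a subsequence of $y'$ that does not start with $a$.

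It is cleaner to run the induction on runs directly. View $y$ as a sequence of $r=r(y)$ runs of lengths $k_1,\dots,k_r$. Every $x\preceq y$ of length $|y|-t$ is obtained by deleting $t_i$ symbols from run $i$, with $\sum_i t_i=t$. Deletions inside a run are indistinguishable, so the map $(t_1,\dots,t_r)\mapsto x$ is well defined and surjective onto $\Bdel_t(y)$, subject to $0\le t_i\le k_i$.

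\textbf{Upper bound.} We make the representation unique. When a run is entirely deleted ($t_i=k_i$), its neighbours merge, and this creates ambiguity. We fix a canonical deletion pattern, for example the lexicographically last vector yielding $x$. We then show that canonical vectors never need to delete a full run, except in a way that can be charged so that the counting is bounded by the number of vectors $(t_1,\dots,t_r)$ of nonnegative integers with sum $t$. That number is $\binom{r+t-1}{t}$.

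A cleaner route is induction on $t$, using the known fact that for every $z$ with $r(z)$ runs, each one-deletion step satisfies $|\Bdel_1(z)|=r(z)$, together with $r(z')\le r(z)$ for $z'\in\Bdel_1(z)$. Concretely, $\Bdel_t(y)=\bigcup_{z\in \Bdel_1(y)}\Bdel_{t-1}(z)$. Choosing an order on the $r$ one-deletions (deleting from run $1,\dots,r$), we count only those $x$ whose leftmost-deleted run index is $i$. This gives the recursion $D_t(r)\le \sum_{i=1}^{r} D_{t-1}(r-i+1)$, in which the maximal count for a string with at most $s$ runs is bounded by the count when deletions are confined to runs $i,\dots,r$. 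Solving by the hockey-stick identity gives $\sum_{s=1}^{r}\binom{s+t-2}{t-1}=\binom{r+t-1}{t}$.

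\textbf{Lower bound.} We exhibit $\binom{r-t+1}{t}$ distinct subsequences. For every choice of $t$ run indices $i_1<\dots<i_t$ in $\{1,\dots,r\}$ that are pairwise non-adjacent ($i_{j+1}\ge i_j+2$), we delete one symbol from each chosen run. No run disappears unless it has length $1$. To handle length-$1$ runs, we note that deleting such a run merges its neighbours but reduces the run count by exactly $2$ in a controlled way. The non-adjacency condition ensures that the resulting strings are distinct, since one can recover the set of chosen runs by comparing run structures left to right. The number of $t$-subsets of $[r]$ with no two adjacent is $\binom{r-t+1}{t}$.

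The main obstacle is the injectivity argument in the lower bound when runs of length one are deleted. We would first try to argue distinctness by a leftmost-difference argument: take the smallest index where two chosen sets differ and show that the prefixes of the resulting strings differ there in length or in run composition. Since this is Levenshtein's classical result, we would ultimately cite \cite{Lev66} for the details.
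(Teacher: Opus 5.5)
The paper gives no proof of this lemma; it only cites Levenshtein~\cite{Lev66}. Your proposal is a self-contained elementary proof, and its core is correct: stars-and-bars for the upper bound, and non-adjacent run selections for the lower bound. Two points should be tightened.

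First, the upper bound is already finished by the sentence just before it. The map $(t_1,\dots,t_r)\mapsto x$ is surjective onto $\Bdel_t(y)$. Its domain lies inside the set of nonnegative integer vectors of length $r$ summing to $t$, which has $\binom{r+t-1}{t}$ elements. Surjectivity is all an upper bound needs, since uniqueness of representation only matters for lower bounds. So the canonical-pattern detour is unnecessary, and canonical vectors do sometimes have to delete a full run: for $y=010$, $t=1$, the subsequence $x=00$ forces this. Your recursion $D_t(r)\le\sum_i D_{t-1}(r-i+1)$ with the hockey-stick identity is also valid, but it is a longer path to the same bound.

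Second, the injectivity you flag as the main obstacle closes in a few lines. It is worth writing out, because the lower bound is the only half the paper actually uses (in \cref{sec:Cdel-UB-small}).

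Take non-adjacent $t$-sets $S\neq S'\subseteq[r]$, and let $j$ be the least element of $S\triangle S'$, say $j\in S$. Then $j<r$, because $|S|=|S'|$. The two strings begin with the same image of runs $1,\dots,j-1$, so it suffices to compare what follows. Let $a$ be the symbol of run $j$. Run $j$ is intact under $S'$, and by non-adjacency run $j+1$ is intact under $S$.
\begin{itemize}
\item If $k_j\ge 2$, the remainder starts with $a^{k_j-1}\bar a$ under $S$ but with $a^{k_j}$ under $S'$.
\item If $k_j=1$, it starts with $\bar a$ under $S$ but with $a$ under $S'$.
\end{itemize}
Hence the $\binom{r-t+1}{t}$ strings are pairwise distinct, and no appeal to \cite{Lev66} is needed.

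Your aside that deleting a length-one run lowers the run count by exactly $2$ is false for the first and last runs, but this argument never uses it.
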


For $y\in\bits^{n}$, we define its \emph{$t$-insertion ball} as
\begin{equation*}
    \Bins_{t}(y) = \{x\in\{0,1\}^{|y|+t}: y \preceq x\}.
\end{equation*}
The size of the insertion ball around $y$ does \emph{not} depend on $y$. 
We have the following lemma due to Chvátal and Sankoff~\cite{CS75}.
\begin{lemma}[{\cite[Lemma 1]{CS75}}]\label{lem:ins-ball}
    For any $y\in \zo^n$,
    \[
    | \Bins_{t} (y)| = \sum_{i=0}^t \binom{n + t}{i}. 
    \]
\end{lemma}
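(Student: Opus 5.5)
The plan is to show that $|\Bins_t(y)|$ does not depend on $y$ by counting supersequences through a canonical (greedy, leftmost) embedding. Once that is done, we evaluate the count at the convenient center $y=0^n$, which gives the stated binomial sum directly.

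\emph{Step 1 (canonical embedding).} Fix $y\in\zo^n$ and $x\in\zo^{n+t}$ with $y\preceq x$. Let $i_1<\dots<i_n$ be the greedy leftmost embedding: $i_1$ is the first position of $x$ equal to $y_1$, and $i_{j+1}$ is the first position after $i_j$ equal to $y_{j+1}$. This is well defined whenever $y\preceq x$, and it is unique. The minimality of each $i_j$ is equivalent to the following local conditions:
\begin{itemize}
\item every position $p<i_1$ has $x_p=1-y_1$;
\item every position $i_j<p<i_{j+1}$ has $x_p=1-y_{j+1}$;
\item the positions $p>i_n$ are unconstrained.
\end{itemize}
Conversely, take any increasing tuple $i_1<\dots<i_n$ in $[n+t]$ and any filling of the positions after $i_n$. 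Set $x_{i_j}=y_j$ and fill the gaps before $i_n$ with the forced complementary symbols. The resulting $x$ contains $y$, and its greedy embedding is exactly $(i_1,\dots,i_n)$. So we have a bijection between $\Bins_t(y)$ and pairs (increasing $n$-tuple, binary string of length $n+t-i_n$).

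\emph{Step 2 (independence of $y$).} From the bijection,
\[
|\Bins_t(y)|=\sum_{m=n}^{n+t}\binom{m-1}{n-1}2^{\,n+t-m},
\]
since $i_n=m$ leaves $\binom{m-1}{n-1}$ choices for $i_1,\dots,i_{n-1}$. This expression involves only $n$ and $t$, so it is independent of $y$.

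\emph{Step 3 (evaluation).} Rather than simplifying the sum algebraically, we use independence and take $y=0^n$. A string $x\in\zo^{n+t}$ contains $0^n$ as a subsequence iff it has at least $n$ zeros, i.e.\ at most $t$ ones. Hence $|\Bins_t(0^n)|=\sum_{i=0}^t\binom{n+t}{i}$, which proves the lemma.

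The only step needing care is the bijection in Step 1. One must check that the forced gap symbols really reproduce the greedy embedding, so that no supersequence is counted twice. This follows because in each gap the symbols differ from the next target symbol $y_{j+1}$ (or $y_1$ before $i_1$), so the greedy scan cannot stop early.
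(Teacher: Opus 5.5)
Your proof is correct. The greedy leftmost-embedding bijection gives $|\Bins_t(y)|=\sum_{m=n}^{n+t}\binom{m-1}{n-1}2^{n+t-m}$, which depends only on $n$ and $t$, and evaluating at $y=0^n$ yields $\sum_{i=0}^t\binom{n+t}{i}$. The paper gives no proof of its own; it only cites Chvátal and Sankoff~\cite{CS75}, and your argument is the standard proof of this fact.
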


We now formally define list-decoding from insertions and list-decoding from deletions.
\begin{definition}[List-decoding from insertions and list-decoding from deletions] \label{def:list-decodable-insdel}
    A code $\cC \subseteq \bits^n$ is \emph{$(\delta, L)$-list-decodable from insertions} if for  every $y\in \bits^{(1+\delta)n}$ we have that\footnote{For the sake of simplicity, we assume that $\delta n$ is an integer throughout the paper. Taking floors and ceilings does not change our results.}
    \[
    \left| \Bdel_{\delta n}(y)  \cap \cC \right| \leq L.
    \]

    Analogously, we say that a code $\cC \subseteq \bits^n$ is \emph{$(\delta, L)$-list-decodable from deletions} if for every $y\in \bits^{(1-\delta)n}$ we have that 
    \[
        \left| \Bins_{\delta n}(y)  \cap \cC \right| \leq L.
    \]
\end{definition}

\begin{remark}
    In the standard definition of list-decoding a $(\delta, L)$-list-decodable code is required to handle any \emph{up to} $\delta n$ errors. 
    For insertions and deletions we may consider exactly $\delta n$ errors, as we do in \cref{def:list-decodable-insdel}, without loss of generality.
    Indeed, if for some $y$ with $|y|\geq (1-\delta)n$ we have $|\Bdel_{n-|y|}(y)\cap \cC|>L$, then the same holds for any length-$(1-\delta)n$ subsequence of $y$.
    An analogous argument holds for insertions.
\end{remark}

\subsection{Large deviations}

Some basic results from the theory of large deviations will be useful in our study of the limitations of codes generated by order-$1$ Markov chains 
for list-decoding from deletions in \cref{sec:Cdel-markov}.
We collect them here.

    \begin{lemma}[{Cramér's theorem~\cite[Theorem 2.2.3]{DZ09}}]\label{thm:cramer}
        Let $(Z_i)_{i\in\N}$ be a sequence of i.i.d.\ real-valued random variables with common moment-generating function $M(\gamma)=\E[e^{\gamma Z_1}]$.
        Then, for any open set $U\subseteq\R$ we have
        \begin{equation*}
            \liminf_{n\to\infty}\frac{1}{n} \ln \Pr\left[\frac{1}{n}\sum_{i=1}^n Z_i\in U\right]\geq -\inf_{a\in U}I(a),
        \end{equation*}
        where\footnote{Note that we allow $\gamma$'s such that $M(\gamma)=\infty$.} 
        \begin{equation*}
            I(a) = \sup_{\gamma\in \R}[\gamma a - \ln M(\gamma)]
        \end{equation*}
        is the Lagrange-Fenchel transform of $M$.
    \end{lemma}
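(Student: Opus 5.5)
Since the statement is a standard large-deviations fact cited from~\cite{DZ09}, I would reproduce the classical \emph{exponential tilting} (change of measure) proof. The plan rests on a local reduction. Because $U$ is open, it suffices to show that for every $a\in U$ with $I(a)<\infty$ and every $\eta>0$ small enough that $(a-\eta,a+\eta)\subseteq U$,
\begin{equation*}
\liminf_{n\to\infty}\frac{1}{n}\ln\Pr\left[\frac{1}{n}\sum_{i=1}^n Z_i\in(a-\eta,a+\eta)\right]\geq -I(a).
\end{equation*}
Taking the supremum over $a\in U$ then gives the claim. Throughout, write $\mu$ for the law of $Z_1$ and $S_n=\sum_{i=1}^n Z_i$.

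\textbf{Step 1: the nice case.} Suppose $M$ is finite in a neighborhood of some $\gamma^\ast$ with $M'(\gamma^\ast)/M(\gamma^\ast)=a$. Then $\gamma^\ast$ attains the supremum defining $I(a)$, so $I(a)=\gamma^\ast a-\ln M(\gamma^\ast)$. Define the tilted law
\begin{equation*}
d\tilde\mu(z)=\frac{e^{\gamma^\ast z}}{M(\gamma^\ast)}\,d\mu(z),
\end{equation*}
whose mean is exactly $a$. Let $A_n$ be the event $\{S_n/n\in(a-\eta,a+\eta)\}$. Then
\begin{equation*}
\Pr_\mu[A_n]=\E_{\tilde\mu^{\otimes n}}\!\left[M(\gamma^\ast)^n e^{-\gamma^\ast S_n}\mathbf 1_{A_n}\right]\geq e^{-n(\gamma^\ast a-\ln M(\gamma^\ast))-n|\gamma^\ast|\eta}\,\tilde\Pr[A_n].
\end{equation*}
By the weak law of large numbers under $\tilde\mu$, $\tilde\Pr[A_n]\to1$. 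This gives the bound with an extra $-|\gamma^\ast|\eta$ loss. Since we may shrink $\eta$ (shrinking the interval only lowers the probability), the loss vanishes.

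\textbf{Step 2: degenerate cases.} If $\mu$ is supported on $[a,\infty)$ or $(-\infty,a]$, the function $\gamma a-\ln M(\gamma)$ is monotone and its supremum equals $-\ln\mu(\{a\})$. The bound then follows directly from $\Pr[A_n]\geq\mu(\{a\})^n$. If $\mu$ puts mass on both sides of $a$, then $\gamma a-\ln M(\gamma)\to-\infty$ as $|\gamma|\to\infty$, so a maximizer exists whenever $M$ is finite everywhere.

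\textbf{Step 3: truncation, the main obstacle.} The hard part is the case where $M$ is infinite on part of $\mathbb R$, so the maximizer may fail to exist or $M$ may not be differentiable there. I would handle this by truncation. Let $\mu_K$ be $\mu$ conditioned on $|Z_1|\leq K$, and let $M_K$ and $I_K$ be its moment-generating function and rate function. Since $M_K$ is finite everywhere, Steps 1 and 2 give the lower bound for $\mu_K$. The original probability satisfies
\begin{equation*}
\Pr_\mu[A_n]\geq\mu([-K,K])^n\,\Pr_{\mu_K}[A_n].
\end{equation*}
Hence, writing $\Lambda_K(\gamma)=\ln\E_\mu\!\left[e^{\gamma Z_1}\mathbf 1_{|Z_1|\leq K}\right]$, we get
\begin{equation*}
\liminf_{n\to\infty}\frac{1}{n}\ln\Pr_\mu[A_n]\geq -\sup_\gamma[\gamma a-\Lambda_K(\gamma)]=:-J_K.
\end{equation*}
It remains to show $\lim_{K\to\infty}J_K\leq I(a)$. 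The sequence $J_K$ is nonincreasing in $K$; let $J^\ast$ be its limit. The level sets $\{\gamma:\gamma a-\Lambda_K(\gamma)\leq J^\ast\}$ are compact (once $K$ is large enough that $\mu_K$ has mass on both sides of $a$), nonempty, and nested. So they share a common point $\gamma_0$. By monotone convergence, $\Lambda_K(\gamma_0)\uparrow\ln M(\gamma_0)$, which gives $\gamma_0a-\ln M(\gamma_0)\geq -J^\ast$. I therefore obtain $-J^\ast\geq -I(a)$, which is what is needed.

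The nested-compact-sets argument in Step 3 is the step I expect to need the most care. In particular, the case where $\mu$ has mass only on one side of $a$ for all $K$ reduces back to Step 2.
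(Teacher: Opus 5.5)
The paper gives no proof of this lemma; it cites it from Dembo--Zeitouni. Your plan is exactly the proof given there: reduce to small intervals around $a\in U$, tilt and apply the weak law when the supremum defining $I(a)$ is attained where $M$ is finite nearby, handle the one-sided case via $\Pr[A_n]\ge\mu(\{a\})^n$, and truncate to $[-K,K]$ in general. The reduction, Steps 1 and 2, and the truncation bound $\liminf_n\frac1n\ln\Pr_\mu[A_n]\ge -J_K$ are all correct.

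The final limiting argument in Step 3 fails as written, because the inequalities are reversed. For large $K$, the function $f_K(\gamma):=\gamma a-\Lambda_K(\gamma)$ is continuous and tends to $-\infty$ as $|\gamma|\to\infty$. So your sets $\{\gamma: f_K(\gamma)\le J^\ast\}$ contain every $\gamma$ of large absolute value and are not compact. Moreover, a point $\gamma_0$ with $f_K(\gamma_0)\le J^\ast$ for all $K$ yields, after monotone convergence, only $\gamma_0 a-\ln M(\gamma_0)\le J^\ast$, which says nothing about $I(a)$. Even the inequality you state, $\gamma_0 a-\ln M(\gamma_0)\ge -J^\ast$, gives only $I(a)\ge -J^\ast$, not the needed $I(a)\ge J^\ast$. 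This looks like a mix-up between the source's convex convention, $\Lambda_K(\gamma)-\gamma a\le -J^\ast$, and your concave one.

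The fix is to use the superlevel sets $S_K=\{\gamma: f_K(\gamma)\ge J^\ast\}$.
\begin{itemize}
\item They are nonempty, since $\sup_\gamma f_K=J_K\ge J^\ast$ and the supremum is attained.
\item They are compact, by continuity and coercivity of $f_K$.
\item They are nested, because $f_K$ is pointwise nonincreasing in $K$.
\end{itemize}
A common point $\gamma_0$ then satisfies $\gamma_0 a-\ln M(\gamma_0)=\lim_K f_K(\gamma_0)\ge J^\ast$ by monotone convergence. Hence $I(a)\ge J^\ast$ and $\liminf_n\frac1n\ln\Pr[A_n]\ge -J^\ast\ge -I(a)$. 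With this correction your proof is complete.

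For the paper's own uses, Step 1 alone already suffices. There $M$ is finite on $(-\infty,\ln(1/\alpha))$ and the optimal tilt is negative, for example $\ln(\delta/\alpha)$ or $\ln(2\delta)$.
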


The following lemma states a basic property of the Lagrange-Fenchel transform $I$.
It is a consequence of~\cite[Lemma 2.2.5 (b)]{DZ09}.
\begin{lemma}\label{lem:prop-LF}
    Let $Z$ be a random variable with moment-generating function $M$, and let $I$ denote the Lagrange-Fenchel transform of $M$.
    If $M(\gamma),M(\gamma')<\infty$ for some $\gamma<0<\gamma'$, then $a\mapsto I(a)$ is non-decreasing in $(-\infty , \E[Z])$, non-increasing in $(\E[Z],\infty)$, and $I(\E[Z])=0$.
\end{lemma}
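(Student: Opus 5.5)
The plan is to deduce everything from the half-line representation of $I$ in \cite[Lemma 2.2.5]{DZ09}, using the assumption $M(\gamma),M(\gamma')<\infty$ for some $\gamma<0<\gamma'$ only to make $\E[Z]$ finite. I should say up front that, carried out faithfully, this route yields the monotonicity in the orientation opposite to the one worded in the statement; I explain where this happens and what I would do about it in the last paragraph.

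\emph{Step 1 (finiteness of the mean).} Since $M$ is finite at some $\gamma<0$ and some $\gamma'>0$, both tails of $Z$ decay exponentially. In particular $\E[|Z|]<\infty$, so $\mu:=\E[Z]$ is a well-defined real number and the intervals $(-\infty,\mu)$ and $(\mu,\infty)$ in the statement make sense.

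\emph{Step 2 ($I(\mu)=0$).} Taking $\gamma=0$ in the supremum gives $I(a)\geq 0-\ln M(0)=0$ for every $a$. Conversely, Jensen's inequality gives $\ln M(\gamma)=\ln\E[e^{\gamma Z}]\geq \gamma\mu$ for every $\gamma$, so $\gamma\mu-\ln M(\gamma)\leq 0$. Hence $I(\mu)=0$, and $\mu$ is a global minimiser of $I$.

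\emph{Step 3 (half-line suprema).} Next I would show that for $a\geq\mu$ one may restrict the supremum to $\gamma\geq 0$, and for $a\leq\mu$ to $\gamma\leq 0$. Indeed, if $a\geq\mu$ and $\gamma<0$, then $\gamma a-\ln M(\gamma)\leq \gamma\mu-\ln M(\gamma)\leq 0$ by Jensen. Since $I(a)\geq 0$, such $\gamma$ never help, and the case $a\leq\mu$ is symmetric.

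Monotonicity then follows because each map $a\mapsto \gamma a-\ln M(\gamma)$ is affine with slope $\gamma$. On $[\mu,\infty)$ only slopes $\gamma\geq0$ matter, so the supremum of these affine functions is non-decreasing in $a$. On $(-\infty,\mu]$ only slopes $\gamma\leq 0$ matter, so the supremum is non-increasing in $a$. Equivalently, $I$ is convex (a supremum of affine functions) with minimum $0$ at $\mu$.

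\emph{Main obstacle: the orientation.} The difficulty is not technical but one of direction. Step 3 gives non-increasing on $(-\infty,\mu)$ and non-decreasing on $(\mu,\infty)$, which is the reverse of the wording. A Gaussian $Z$, for which $I(a)=a^2/2$, shows the worded orientation cannot hold for a non-degenerate $Z$: by the convexity shown in Step 3 it would force $I$ to be constant (hence $\equiv 0$) near $\mu$. My first attempt to match the statement would be to read ``non-decreasing on $(-\infty,\mu)$'' as $I$ growing as $a$ moves away from $\mu$ to the left. I would check the later applications in \cref{sec:Cdel-markov}, where the lemma is presumably used to compare $\inf_{a\in U}I(a)$ over an open set $U$ lying on one side of the mean. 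There, what is actually needed is that $I$ is smallest at the endpoint of $U$ closest to $\mu$, and that is exactly what the half-line representation of Step 3 delivers.
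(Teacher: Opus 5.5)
Your argument is correct and is essentially the proof of \cite[Lemma 2.2.5(b)]{DZ09}, which is the only justification the paper gives for this lemma: Jensen lets you restrict the supremum to $\gamma\geq 0$ or to $\gamma\leq 0$, and a supremum of affine maps whose slopes all share one sign is monotone. You are also right that the orientation in the statement is a typo: \cite{DZ09} has $I$ non-increasing on $(-\infty,\E[Z])$ and non-decreasing on $(\E[Z],\infty)$, and this corrected version is exactly what the paper uses in \cref{lem:lb-subseq} to identify $\inf_{a\in(0,\frac{1}{1-\delta})}I(a)$ with $I(\frac{1}{1-\delta})$ when $\frac{1}{1-\delta}<\E[N_2]$.
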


\section{Capacity of list-decoding from insertions}\label{sec:Cins}

In this section we prove \cref{thm:Cins-intro}, which we restate here.

\cins*

As mentioned in \cref{sec:intro}, the upper bound on $\Cins(\delta)$ in \cref{thm:Cins-intro} was already established by Haeupler, Shahrasbi, and Sudan~\cite[Theorem 1.2]{HSS18}.
For the sake of completeness we briefly discuss their proof here.
Fix the fraction of insertions $\delta\in[0,1]$ and a code $\cC\subseteq\bits^n$ of rate $R\geq (1+\delta)\left(1-h\left(\frac{\delta}{1+\delta}\right)\right)+\eps$.
By an averaging argument, $\cC$ must have list size
\begin{equation*}
    L\geq 2^{-(1+\delta)n} \sum_{c\in\cC}|\Bins_{\delta n}(c)|.
\end{equation*}
By \cref{lem:ins-ball}, we have $|\Bins_{\delta n}(c)|=\sum_{i=0}^{\delta n} \binom{(1+\delta)n}{i}\geq 2^{\left((1+\delta)h\left(\frac{\delta}{1+\delta}\right)-o(1)\right)n}$.
Plugging this lower bound into the lower bound on $L$ above, we conclude that $L\geq 2^{n\left(R-(1+\delta)\left(1-h\left(\frac{\delta}{1+\delta}\right)\right)-o(1)\right)}\geq 2^{\eps n/2}$ by the lower bound on $R$ above, provided that $n$ is sufficiently large.

Below, we focus on proving a matching lower bound, as stated more precisely in the following theorem.

\begin{theorem} \label{thm:l-d-upb-ins}
    Fix $\delta\in[0,1]$ and $\eps>0$.
    Then, for any $R<(1+\delta)\left(1-h\left(\frac{\delta}{1+\delta}\right)\right)-\eps$ and any sufficiently large $n$ there is a code $\cC\subseteq\bits^n$ of rate $R$ that is $(\delta,L=\Theta(1/\eps))$-list-decodable from insertions.
\end{theorem}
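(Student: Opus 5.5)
The plan is to use a random code with memory and a union bound. Fix $\alpha\in[0,1]$, to be chosen later as a function of $\delta$. Sample $M=2^{Rn}$ codewords independently from the symmetric order-$1$ Markov chain $X=X_1\dots X_n$, where $\Pr[X_1=1]=1/2$ and $\Pr[X_i=X_{i-1}]=\alpha$. For $y\in\bits^{(1+\delta)n}$ let $p(y)=\Pr[X\preceq y]$.

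The code fails to be $(\delta,L)$-list-decodable only if some $y$ and some $L+1$ codeword indices have all their codewords in $\Bdel_{\delta n}(y)$. The expected number of such configurations is at most
\begin{equation*}
  2^{(1+\delta)n}\, M^{L+1}\, \max_{y} p(y)^{L+1}.
\end{equation*}
Hence it suffices to prove the uniform estimate
\begin{equation*}
  \max_{y\in\bits^{(1+\delta)n}} p(y)\le 2^{-\left((1+\delta)\left(1-h\left(\frac{\delta}{1+\delta}\right)\right)-o(1)\right)n}.
\end{equation*}
With $R$ at least $\eps$ below this exponent, the expected count is at most $2^{(1+\delta)n-(L+1)\eps n/2}<1$ once $L=\Theta(1/\eps)$. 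Repeated codewords cause no trouble, since a repeated codeword is just another index in the count. All the content is therefore in this worst-case bound on $p(y)$. The point of the memory is that for a uniform $X$ the worst $y$ (many runs) gives only $1-h(\delta)$, and a bias $\alpha\neq 1/2$ should penalize exactly those $y$.

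To bound $p(y)$ I will use the greedy (leftmost) embedding: $X\preceq y$ iff greedy matching of $X$ into $y$ finishes within $(1+\delta)n$ positions. Let $c_i\ge 1$ be the number of positions of $y$ consumed to match $X_i$, and $T=\sum_i c_i$. Then for any $\lambda\in(0,1)$,
\begin{equation*}
  p(y)=\Pr[T\le (1+\delta)n]\le \lambda^{-(1+\delta)n}\,\E[\lambda^T].
\end{equation*}
Conditionally on the past, the law of $c_{i+1}$ depends only on the local structure of $y$ at the current pointer $j$.

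If $y_{j+1}=y_j$ (we are inside a run of $y$), then with probability $\alpha$ we have $c_{i+1}=1$. Otherwise $X$ switches symbol and we jump to the end of the current run, so $c_{i+1}\ge 2$.

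If $y_{j+1}\neq y_j$ (we are at a run boundary), then with probability $1-\alpha$ we have $c_{i+1}=1$. Otherwise we skip the entire next run of $y$, so $c_{i+1}\ge 2$ and the pointer lands at a run boundary again.

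Because the adversary controls which situation occurs, I would set up a two-state potential: weights $w_{\mathrm{in}},w_{\mathrm{bd}}>0$ and a rate $\rho(\lambda,\alpha)$ such that, in either state,
\begin{equation*}
  \E\left[\lambda^{c_{i+1}}w(\text{next state})\mid \text{past}\right]\le \rho\, w(\text{current state}),
\end{equation*}
uniformly over all possible continuations of $y$. This is a $2\times 2$ transfer-matrix inequality, and long skips only help since $\lambda<1$. It gives $\E[\lambda^T]\le O(1)\rho^n$, and hence
\begin{equation*}
  p(y)\le O(1)\,\lambda^{-(1+\delta)n}\rho(\lambda,\alpha)^n .
\end{equation*}

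The final step is to choose $\alpha$ and then $\lambda$ so that $-\log\left(\lambda^{-(1+\delta)}\rho\right)$ equals $(1+\delta)\left(1-h\left(\frac{\delta}{1+\delta}\right)\right)$. As a sanity check, this exponent equals $\log_2$ of the growth rate of $\sum_{i\le \delta n}\binom{(1+\delta)n}{i}$ relative to $2^{(1+\delta)n}$, by \cref{lem:ins-ball}. So the optimum should make every $y$ look like a typical string, with $p(y)\approx |\Bins_{\delta n}(x)|/2^{(1+\delta)n}$ in the exponent. This suggests choosing $\alpha$ so that the two rows of the transfer inequality are balanced. I expect the optimal $\alpha$ to satisfy $\alpha>1/2$ with $1-\alpha$ tied to $\frac{\delta}{1+\delta}$, and $\lambda$ to be the Chernoff-optimal point.

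The main obstacle is this worst-case analysis over $y$: proving that the potential inequality holds uniformly, in particular handling skips of variable length at run boundaries, and that the resulting exponent exactly matches the upper bound rather than being lossy. If the clean two-state potential proves too lossy, I would fall back to the direct sum
\begin{equation*}
  p(y)=\sum_{x\in \Bdel_{\delta n}(y)}\tfrac12\,\alpha^{n-r(x)}(1-\alpha)^{r(x)-1}.
\end{equation*}
I would then count the subsequences of $y$ with a given number of runs via their greedy embeddings, that is, as compositions of skip lengths, and optimize over $y$ by a generating-function argument.
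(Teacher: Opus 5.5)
Your reduction is the same as the paper's: sample codewords from the symmetric order-$1$ Markov chain, union bound over $y$ and $(L+1)$-tuples of indices, and reduce everything to the uniform estimate $\max_y\Pr[X\preceq y]\le 2^{-(\Cins(\delta)-o(1))n}$. (Dismissing repeated codewords is fine for the union bound. For the rate, note that list-decodability of the multiset caps every multiplicity at $L$.)

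\textbf{The gap.} That uniform estimate is the entire content of the theorem (the paper's \cref{prop:X-subseq-y}), and you do not prove it. Worse, the two-state potential you propose, with states ``inside a run'' and ``at a boundary'' of $y$ and one step per bit of $X$, cannot deliver it.
\begin{itemize}
\item If you bound every skip by $\lambda^2$ and every next state by its worst case, you get $\rho\ge\max\{\alpha\lambda+(1-\alpha)\lambda^2,(1-\alpha)\lambda+\alpha\lambda^2\}$. This is minimized at $\alpha=1/2$, and it reproduces exactly the uniformly random exponent $1-h(\delta)$, with nothing at all for $\delta\ge 1/2$.
\item Exact skip lengths do not rescue it. Take a run with two symbols left, followed by a run of length at least $2$: continuing keeps you ``inside'', and switching costs only $c=3$ and also lands ``inside''. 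So any weights force $\rho\ge\alpha\lambda+(1-\alpha)\lambda^3$.
\item A boundary followed by length-$1$ runs forces $\rho\ge(1-\alpha)\lambda+\alpha\lambda^2$.
\end{itemize}
At $\delta=1/2$ we have $2^{-\Cins(1/2)}\approx 0.919$. The first constraint gives $\min_\lambda\lambda^{-3/2}\rho>0.919$ unless $\alpha\lesssim 0.55$, and the second does unless $\alpha\gtrsim 0.70$. So no choice of $\alpha$, $\lambda$ and weights works.

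The underlying problem is that the potential lets the adversary present ``a run with two symbols left'' at every step. No actual $y$ can do this, since long runs make switching expensive, and two states cannot encode that. Two smaller points: after skipping a run from a boundary you need not land at a boundary, and the right parameter is $\alpha=(1+\delta)/2$, not $1-\alpha$ tied to $\delta/(1+\delta)$. Your fallback via $\sum_x\Pr[X=x]$ and generating functions leaves the optimization over $y$, which is the hard part, undone.

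\textbf{The missing idea.} Step through the bits of $y$ instead of the bits of $X$; then there are no variable-length skips at all.
\begin{itemize}
\item Let $N_b(z)$ be the number of symbols of $X$ greedily matched into $z$ when the pending symbol of $X$ is $b$, and let $M_b(\tau,z)=\E[e^{\tau N_b(z)}]$.
\item Reading a bit $b$ of $y$ either matches the pending symbol or does nothing. Hence $M_b(\tau,b\circ z)=e^{\tau}(\alpha M_b(\tau,z)+(1-\alpha)M_{1-b}(\tau,z))$ and $M_{1-b}(\tau,b\circ z)=M_{1-b}(\tau,z)$.
\item So $M=\tfrac12(M_0+M_1)$ grows by a factor of at most $\max(\alpha e^{\tau},(1-\alpha)e^{\tau}+1)$ per bit of $y$, for every $y$.
\item Choosing $e^{\tau}=1/(2\alpha-1)$ equalizes the two terms at $\alpha/(2\alpha-1)$.
\item Markov's inequality on $e^{\tau N}$ then gives $\Pr[X\preceq y]\le\left(\alpha^{1+\delta}/(2\alpha-1)^{\delta}\right)^n$, which equals $2^{-\Cins(\delta)n}$ at $\alpha=(1+\delta)/2$.
\end{itemize}
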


As discussed in \cref{sec:contr-ins}, we prove \cref{thm:l-d-upb-ins} by analyzing a code sampled with memory.
The following lemma is the key behind our result.

\begin{lemma} \label{prop:X-subseq-y}
    Fix $\delta \in [0,1]$ and $\alpha\in (1/2,1)$. 
    Let $X = (X_1, X_2, \ldots, X_n)$ be sampled according to an order-$1$ Markov chain over with each $X_i\in\bits$, $\Pr[X_1=1]=1/2$, and $\Pr[X_{i+1}=X_{i}] = \alpha$.
    Then,
    \[
    \Pr[X\preceq y] \leq \left( \frac{\alpha^{1+\delta}}{(2\alpha - 1)^{\delta}} \right) ^n  .
    \]
\end{lemma}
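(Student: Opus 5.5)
Here $y\in\bits^{(1+\delta)n}$ is arbitrary. The plan is to bound $\Pr[X\preceq y]$ by tracking the greedy (leftmost) embedding of $X$ into $y$. Then I would sum over the possible "gap patterns" with a generating-function bound.

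\textbf{Reduction to the greedy embedding.} $X\preceq y$ holds iff the greedy embedding succeeds. In this embedding, $X_1$ is matched to the first occurrence of $X_1$ in $y$, and each subsequent $X_{i+1}$ is matched to the first occurrence of $X_{i+1}$ in $y$ after the position where $X_i$ was matched. Let $g_i\geq 0$ be the number of positions of $y$ skipped just before matching $X_i$, so that success requires $\sum_i (g_i+1)\leq (1+\delta)n$, i.e.\ $\sum_i g_i\leq \delta n$.

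\textbf{Key observation: the skipped positions are determined by $X$'s transitions.} Given the matched position $p$ of $X_i$, the next symbol $X_{i+1}$ equals $X_i$ with probability $\alpha$ and $1-X_i$ with probability $1-\alpha$, independently of the past. Let $b=y_{p+1}$.
\begin{itemize}
\item If $X_{i+1}=y_{p+1}$, which happens with probability $\alpha$ or $1-\alpha$ depending on whether $b=X_i$, then $g_{i+1}=0$.
\item Otherwise $g_{i+1}\geq 1$. Because $y$ is binary, the skipped block is the run of $y$ beginning at $p+1$. This event has probability equal to the complementary transition probability.
\end{itemize}
Hence, conditional on the past, the joint law of $g_{i+1}$ is dominated as follows. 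We have $g_{i+1}=0$ with probability at most $\alpha$ and $g_{i+1}\geq 1$ with probability at most $\alpha$ when $b\neq X_i$. The symmetric statement holds when $b=X_i$. In either case each value of $g$ carries a probability weight that is either $\alpha$ or $1-\alpha$, with $0$ and "positive" receiving complementary weights.

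Bounding by the worst case, I get $\Pr[\text{step } i \text{ skips } g_i] \le \alpha\cdot w(g_i)$ for a weight $w$ with $w(0)=1$ and $w(g)\le ((1-\alpha)/\alpha)^{[\,\cdot\,]}$-type factors. More cleanly, a union bound over the sets of skipped positions gives
\[
\Pr[X\preceq y]\le \sum_{(g_1,\dots,g_n):\ \sum g_i\le \delta n}\ \prod_i p(g_i),
\]
where $p$ is a per-step weight. The first step contributes $1/2\le\alpha$.

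\textbf{Chernoff-type summation.} For any $z\in(0,1]$, $\mathbb 1[\sum g_i\le \delta n]\le z^{\sum g_i-\delta n}$. This gives
\[
\Pr[X\preceq y]\le z^{-\delta n}\Bigl(\sum_{g\ge 0} p(g)z^{g}\Bigr)^n .
\]
I would then justify an effective weight $p(g)=\alpha(1-\alpha)^{g}/\alpha^{g}$-like geometric form. With this form the per-step sum is $\alpha/(1-\tfrac{1-\alpha}{\alpha}z)$ or similar. Optimizing $z$ should produce $\alpha^{1+\delta}(2\alpha-1)^{-\delta}$. For instance, choosing $z$ with $1-z\cdot\frac{1-\alpha}{\alpha}$ proportional to $(2\alpha-1)/\alpha$ gives terms $\alpha\cdot \frac{\alpha}{2\alpha-1}\cdots$, and $\alpha>1/2$ is exactly what makes the geometric series converge.

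\textbf{Main obstacle.} The hard step is making the per-step weight rigorous. The skip length $g$ equals the length of a run of $y$, so it is determined by $y$, not random. The randomness only decides \emph{whether} the run is skipped. The count should therefore be over which of $y$'s runs are skipped.

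I would handle this by rephrasing: $X\preceq y$ implies $X$ is determined by the set $S$ of skipped runs together with $y$. The probability of that specific $X$ is then at most $\tfrac12\alpha^{n-1-k}(1-\alpha)^{k}$ times correction factors, where $k$ is the number of "switches". I would then sum $\alpha^{n}\left(\tfrac{1-\alpha}{\alpha}\right)^{\#}$ over admissible skip sets with total length $\le\delta n$. The number of skip configurations of total length $m=\delta n$ is bounded using $\binom{n+m}{m}$-type counts weighted by $(1-\alpha)/\alpha$ ratios. Maximizing over $m\le\delta n$ via the generating-function bound should give the claimed $\bigl(\alpha^{1+\delta}/(2\alpha-1)^{\delta}\bigr)^n$. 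I expect the bookkeeping of which transitions cost $\alpha$ versus $1-\alpha$ to be the delicate point.
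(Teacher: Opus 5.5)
Your proposal has a genuine gap at the step you flag as the main obstacle, and it cannot be closed in the form you suggest.

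\textbf{The geometric weight is false.} The Chernoff step over gaps needs a domination $\Pr[(g_1,\dots,g_n)]\le\prod_i p(g_i)$, and the weight $p(g)=\alpha q^{g}$ with $q=(1-\alpha)/\alpha$ does not satisfy it. Suppose $X_i$ is matched at the last position of a run of $y$. Then $X_{i+1}=X_i$ has probability $\alpha$ and forces a skip of the entire next run of $y$, whatever its length $\ell$. That skip therefore costs $\alpha$, not $\alpha q^{\ell}$. For instance, take $y=(0^a1)(0^a1)\cdots$ with $a=\lceil 1/\delta\rceil$. The outcome $X=0^n$ embeds and has probability $\frac{1}{2}\alpha^{n-1}$, but its gap sequence has weight roughly $\alpha^n q^{n/a}$, which is exponentially smaller.

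\textbf{The valid per-step bound is too weak.} What you can honestly get is $\E[z^{g_{i+1}}\mid\text{past}]\le\alpha+(1-\alpha)z$ for $z\le 1$. Indeed, the conditional generating function is $\alpha+(1-\alpha)z^{\ell}$ inside a run and $(1-\alpha)+\alpha z^{\ell}$ at a run boundary. This yields only $\Pr[X\preceq y]\le 2^{-nD(\delta\|1-\alpha)}$ (binary relative entropy) for $\delta\le 1-\alpha$, and nothing otherwise. Optimized over $\alpha$, this never beats the uniformly random exponent $1-h(\delta)$. By contrast, the lemma at $\alpha=(1+\delta)/2$ gives exponent $(1+\delta)(1-h(\frac{\delta}{1+\delta}))$.

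\textbf{The optimization does not give the target either.} Even granting the geometric weight, $\min_{0<z\le 1} z^{-\delta}\alpha/(1-qz)$ equals $\alpha(1+\delta)^{1+\delta}(q/\delta)^{\delta}$, or $\alpha/(1-q)$ when the unconstrained minimizer exceeds $1$. The target is $\alpha^{1+\delta}(2\alpha-1)^{-\delta}=\alpha(1-q)^{-\delta}$. At $\alpha=(1+\delta)/2$ your value is strictly larger for every $\delta\in(0,1)$.

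\textbf{The missing idea.} You need an amortization that couples the two situations: matching strictly inside a run of $y$, versus matching at a run boundary, where the roles of $\alpha$ and $1-\alpha$ are swapped. This is most naturally done by walking along $y$ rather than along $X$.

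The paper applies Markov's inequality to $e^{\tau N}$, where $N$ is the number of symbols of the unbounded chain that the greedy procedure matches inside $y$. It then writes a recursion over the bits of $y$ for the pair $M_0,M_1$ of moment generating functions, conditioned on the first (currently awaited) symbol of $X$. Setting $M=(M_0+M_1)/2$ and using $|M_0-M_1|/2\le M$ gives $M(\tau,b\circ z)\le\max(\alpha e^{\tau},(1-\alpha)e^{\tau}+1)\,M(\tau,z)$. The choice $\tau=\ln\frac{1}{2\alpha-1}$ makes this factor equal to $\frac{\alpha}{2\alpha-1}$ for every bit of $y$. Hence $\Pr[X\preceq y]\le(2\alpha-1)^{n}(\alpha/(2\alpha-1))^{(1+\delta)n}$.

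Your closing reformulation, summing $\frac{1}{2}\alpha^{n-1-k}(1-\alpha)^k$ over length-$n$ subsequences of $y$ with $k$ switches, is exact. But bounding that weighted count uniformly in $y$ is the whole content of the lemma, and the proposal does not supply it.
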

\begin{proof}
    Fix an arbitrary $y\in\bits^{(1+\delta)n}$.
    Following the definition of the $X_i$'s in the lemma statement, we consider the unbounded random process $X_1,X_2,\dots,X_n,X_{n+1},\dots$ and define the following lexicographic matching procedure.
    Starting from $i = 1$, we successively match each bit $X_i \in \bits$ to the leftmost unmatched bit of $y$ that is equal to $X_i$, and stop once $X_i$ cannot be matched to any leftover bit of $y$. 
    Denote by $N(y)$ the number of symbols of $X$ that were matched to $y$ using this procedure.
    Then, we have that
    \begin{equation*}
        \Pr[(X_1,\dots,X_n)\preceq y] = \Pr[N(y)\geq n].
    \end{equation*}
    Additionally, for $b\in \bits$ denote by $N_b(y)$ the number of symbols of $X$ that were matched to $y$ using this procedure, conditioned on $X_1 = b$. 
    Note that
    \begin{equation} \label{eq:Pr-X-superseq-y} 
        \Pr[X\preceq y] = \frac{1}{2}\Pr[N_0(y)\geq n] + \frac{1}{2}\Pr[N_1(y)\geq n],
    \end{equation}
    since $X_1$ is uniform over $\bits$.
    Define $M_b(\tau, y) = \E[e^{\tau N_b(y)}]$ and $M(\tau,y) = \frac{1}{2}(M_0(\tau, y) + M_1 (\tau, y))$.
    For any $\tau> 0$ it holds that
    \begin{align} \label{eq:markov-bound}
        \Pr[X\preceq y] &= \frac{1}{2} \cdot \left(\Pr[N_0(y) \geq n] + \Pr[N_1(y) \geq n]\right) \nonumber \\
        &= \frac{1}{2} \cdot \left(\Pr[e^{\tau N_0(y)} \geq e^{\tau n}] + \Pr[e^{\tau N_1(y)} \geq e^{\tau n}]\right) \nonumber \\
        &\leq \frac{1}{2} \left( e^{-\tau n} M_0(\tau, y) + e^{-\tau n} M_1(\tau, y)  \right) \nonumber \\
        &=e^{-\tau n} M(\tau, y),
    \end{align}
    where the inequality follows from Markov's inequality.
    We will carefully choose $\tau$ later to optimize our upper bound on $\Pr[X\preceq y]$ above.

    We now take a closer look at $M(\tau, y)$.
    At a high level, we will study how the moment-generating function behaves as we successively try to match the next bit of $y$ to our currently unmatched bit of $X$.
    Towards that, note that we can extend the notions above as is to $N(z)$, $N_b(z)$, $M(\tau,z)$, and $M_b(\tau,z)$ for any $z$.
    
    Write $y=b\circ z$ for $b\in\bits$ and some bitstring $z$.
    By our matching procedure and the generation of $X$ the random variables $N_0(y)$, $N_1(y)$ and $N_0(z)$, $N_1(z)$ satisfy the relationships described in \cref{tab:rel-yz}.

    \begin{table}[h]
    \centering
    \begin{tabular}{c|c|c}
                & $y = 0\circ z$ & $y = 1 \circ z$\\
                \hline
     $X_1 = 0, X_2 = 0$  &  $N_0(y) = 1 + N_0(z)$ & $N_0(y) = N_0(z)$\\
                \hline
     $X_1 = 0, X_2 = 1$  &  $N_0(y) = 1 + N_1(z)$ & $N_0(y) = N_0(z)$ \\
     \hline
     $X_1 = 1, X_2 = 0$  & $N_1(y) = N_1(z)$  & $N_1(y) = 1 + N_0(z)$ \\
                \hline
     $X_1 = 1, X_2 = 1$  & $N_1(y) = N_1(z)$  &  $N_1(y) = 1 + N_1(z)$ \\
    \end{tabular}
    \caption{Relationship between $N_0(y)$, $N_1(y)$ and $N_0(z)$, $N_1(z)$.}\label{tab:rel-yz}
    \end{table}
    
    Therefore, 
    \begin{align}
        M_{0}(\tau,0\circ z) &= \alpha \E[e^{\tau(1+N_{0}(z))}] +(1-\alpha)\E[e^{\tau(1+N_{1}(z))}] = e^{\tau}\left( \alpha M_{0}(\tau,z) + (1-\alpha)M_{1}(\tau,z) \right), \label{eq:M00}\\
        M_{0}(\tau,1\circ z) &= \E[e^{\tau N_{0}(z)}] = M_0(\tau, z).\label{eq:M01}
    \end{align}
    Similarly, 
    \begin{align}
        M_{1}(\tau,0\circ z) &= M_1(\tau, z), \label{eq:M10}\\
        M_{1}(\tau,1\circ z)
        &=e^{\tau}\left( \alpha M_{1}(\tau,z) + (1-\alpha)M_{0}(\tau,z) \right)  . \label{eq:M11}
    \end{align}
    Combining \cref{eq:M00,eq:M10}, we get 
    \begin{align}
        M(\tau, 0\circ z) &= \frac{1}{2}(M_0(\tau, 0\circ z) + M_1 (\tau, 0\circ z)) \nonumber\\
        & = \frac{e^{\tau} + 1}{2} M_1 (\tau, z) + \frac{\alpha e^{\tau}}{2} (M_0(\tau, z) - M_1(\tau, z)) \nonumber\\
        &= \frac{e^{\tau} + 1}{2} M(\tau, z) + \frac{2\alpha e^{\tau} - e^{\tau} - 1}{2} \cdot \frac{M_0(\tau, z) - M_1(\tau, z)}{2},\label{eq:UB-0z}
    \end{align}
    where the last equality follows by noting that $M_1(\tau,z) = M(\tau, z) - \frac{1}{2}(M_0(\tau,z) - M_1(\tau,z))$. 
    Similarly, from \cref{eq:M01,eq:M11} we get that
    \begin{equation}\label{eq:UB-1z}
        M(\tau, 1\circ z) = \frac{e^{\tau} + 1}{2} M(\tau, z) + \frac{2\alpha e^{\tau} - e^{\tau} - 1}{2} \cdot \frac{M_1(\tau, z) - M_0(\tau, z)}{2} .
    \end{equation}
    Now, observe that, by the triangle inequality,
    \begin{equation*}
        \left|\frac{1}{2}(M_0(\tau, z) - M_1(\tau, z))\right|= \left|\frac{1}{2}(M_1(\tau,z) - M_0(\tau,z))\right| \leq M(\tau, z).
    \end{equation*}
    Therefore, by \cref{eq:UB-0z,eq:UB-1z}, for any $b \in \bits$ we have
    \begin{equation*}
        M(\tau, b\circ z) \leq \frac{e^{\tau} + 1 + |2\alpha e^{\tau} - e^{\tau} - 1|}{2} \cdot M(\tau, z) \leq \max (\alpha e^{\tau}, e^{\tau}(1 - \alpha) + 1) \cdot M(\tau, z).
    \end{equation*}
    Now, if we take $\tau = \ln \frac{1}{2\alpha - 1}>0$ (which is well-defined, recalling that $\alpha\in(1/2,1)$, and minimizes the maximum of the two elements), we get that
    \begin{equation}\label{eq:rec-step}
        M\left(\tau=\ln \frac{1}{2\alpha - 1}, b\circ z\right) \leq \frac{\alpha}{2\alpha - 1} \cdot M(\tau, z).
    \end{equation}
    
    By recursively applying \cref{eq:rec-step} to the $(1+\delta)n$ bits of $y$, we conclude that
    \begin{equation} \label{eq:moment-gen-bound}
        M\left(\ln \frac{1}{2\alpha - 1}, y \right) \leq \left( \frac{\alpha}{2\alpha - 1} \right)^{(1+\delta)n} \cdot M\left( \ln \frac{1}{2\alpha - 1}, \eps \right) \leq \left( \frac{\alpha}{2\alpha - 1} \right)^{(1 + \delta)n},
    \end{equation}
    where the second inequality holds because for the empty string $\epsilon$ there are no symbols to match to, and so $N_b(\eps) = 0$ with probability $1$ and $M (\tau, \eps) = 1$ for all $\tau$. 
    Combining \cref{eq:markov-bound,eq:moment-gen-bound}, we get
    \[
        \Pr[X\preceq y] \leq \left( \frac{1}{2\alpha - 1} \right)^{-n} \cdot \left( \frac{\alpha}{2\alpha - 1} \right) ^{(1 + \delta)n} = \left( \frac{\alpha^{1+\delta}}{(2\alpha - 1)^{\delta}} \right) ^n. \qedhere
    \]
\end{proof}

We now prove \cref{thm:l-d-upb-ins} with the help of \cref{prop:X-subseq-y}.
\begin{proof}[Proof of \cref{thm:l-d-upb-ins}]
    We will assume that $0<\delta<1$, since it is easy to see that $\Cins(0)=1$ and $\Cins(1)=0$.
    Let $\cC\subseteq\bits^n$ obtained by generating $2^{Rn}$ independent samples from the random process $X$ described in \Cref{prop:X-subseq-y}.
    We will first see $\cC$ as a multiset, and later we will remove repeated codewords to conclude the argument (this will not affect the asymptotic rate).
    Since codewords are generated independently and identically according to the distribution of the order-$1$ Markov chain $X$, the probability that a fixed $y$ is a supersequence of a fixed set of $L+1$ codewords of $C$ is $\Pr[X\preceq y]^{L+1}$.
    Union bounding over all $y\in \bits^{(1+\delta)n}$ and over all subsets of $L+1$ codewords, we get that the probability that $\cC$ is \emph{not} $(\delta, L)$-list-decodable from insertions is at most
    \[
        \sum_{y\in \bits^{(1+\delta)n}} \binom{2^{^{Rn}}}{L+1} \cdot \Pr[X\preceq y]^{L+1}.
    \]
    Invoking \Cref{prop:X-subseq-y}, we get that this quantity is upper bounded by
    \[
        2^{(1 + \delta)n}\cdot 2^{Rn (L+1)} \cdot \left( \frac{\alpha^{1+\delta}}{(2\alpha - 1)^{\delta}} \right)^{n(L+1)} = 2^{n(L+1)\left(R + (1+\delta) \log \alpha - \delta \log(2\alpha - 1) + \frac{1+\delta}{L+1}\right)}.
    \]
    Thus, if 
    \begin{equation}\label{eq:Cins-rate-bound}
        R \leq \delta \log (2 \alpha- 1) - (1+\delta)\log \alpha - \frac{1 + \delta}{L+1} - \frac{\varepsilon}{2},
    \end{equation}
    then the probability that $\cC$ is $(\delta, L)$-list-decodable from insertions is at least $1-2^{-\varepsilon n/2}$. 
    Choose $\alpha = \frac{1+\delta}{2}\in(1/2,1)$ and $L = \frac{2(1 + \delta )}{\varepsilon} - 1$. 
    Then, \cref{eq:Cins-rate-bound} gives that $\cC$ is $(\delta, L = \frac{2(1 + \delta )}{\varepsilon} - 1)$-list-decodable from insertions with probability at least $1-2^{-\varepsilon n/2}$ provided that
    \begin{equation}\label{eq:Cins-rate-bound-opt}
        R \leq (1 +\delta ) + \delta \log\delta -(1+\delta)\log (1+\delta) - \varepsilon =(1+\delta)\left(1 - h\left( \frac{\delta}{1+\delta} \right) \right) - \varepsilon.
    \end{equation}

    Finally, we argue that the number of pairwise distinct codewords in $\cC$ (when seen as a multiset) is at least $2^{Rn-1}$ with high probability over the sampling of $\cC$. 
    Since removing codewords from $\cC$ only improves the list-decoding guarantees, this means that with high probability $C$ will have size at least $2^{Rn-1}$ and will be $(\delta,L)$-list-decodable from insertions.
    Together with \cref{eq:Cins-rate-bound-opt}, this concludes the proof.

    To see the claim about the number of pairwise distinct codewords in $\cC$, first note that the probability that the Markov process generates two identical codewords is
    \begin{equation*}
        \frac{1}{2} (1 - 2\alpha (1 - \alpha))^{n-1} = (1 + \delta^2)^{n-1}2^{-n} \leq 2^{-n(1 - \log(1+\delta^2))},
    \end{equation*}
    where the equality follows by recalling that $\alpha = \frac{1+\delta}{2}$. 
    Consequently, the expected number of pairs of identical codewords in $\cC$ is at most
    \begin{align}
        \binom{2^{Rn}}{2} \cdot 2^{-n(1 - \log(1+\delta^2))} &\leq 2^{n(2R - 1 + \log(1 + \delta^2))} \nonumber\\
        &\leq 2^{n\left(R + \delta \left(1 + \log \frac{\delta}{1+\delta} \right) + \log \frac{1+\delta^2}{1 + \delta} \right)} \nonumber\\
        & = 2^{n (R - c_\delta)},\label{eq:Cins-rep-codewords}
    \end{align}
    where $c_\delta=-\delta \left(1 + \log \frac{\delta}{1+\delta} \right) - \log \frac{1+\delta^2}{1 + \delta}>0$ for all $\delta\in (0,1)$. 
    The second inequality holds because $R < (1+\delta)(1 - h(\frac{\delta}{1+\delta}))$ by \cref{eq:Cins-rate-bound-opt}. The fact that $c_\delta<0$ holds because $\log \frac{\delta}{1+\delta} < -1$ and $\log \frac{1 + \delta^2}{1+\delta} < 0$ for $\delta\in(0,1)$.
    
    Combining Markov's inequality with the upper bound on the expected number of pairs of identical codewords in $\cC$ from \cref{eq:Cins-rep-codewords}, we conclude that there will be at most $2^{n(R-c_\delta/2)}$ pairs of identical codewords in $\cC$ with probability at least $1-2^{-\frac{c_\delta n}{2}}$.
    In particular, this means that for all sufficiently large $n$, the number of pairs of repeated codewords in $\cC$ will be at most $2^{Rn-1}$ with probability at least $1-2^{-\frac{c_\delta n}{2}}$, and so removing repeated codewords from $\cC$ will yield a code $\cC'\subseteq\bits^n$ of size at least $2^{Rn-1}$ with probability at least $1-2^{-\frac{c_\delta n}{2}}$.
    Moreover, if $\cC$ is $(\delta,L)$-list-decodable from insertions then so is $\cC'$.
    We saw above that $\cC$ is $(\delta,L)$-list-decodable from insertions with probability at least $1-2^{-\frac{\eps n}{2}}$.
    Therefore, by a union bound, $\cC'$ will have size at least $2^{Rn-1}$ and will be $(\delta,L)$-list-decodable from insertions simultaneously with probability at least $1-2^{-\frac{c_\delta n}{2}}-2^{-\frac{\eps n}{2}}$, which is positive for all sufficiently large $n$ (recall that $c_\delta<0$ for all $\delta\in(0,1)$). \qedhere
\end{proof}

\section{Capacity of list-decoding from deletions}

\subsection{Order-1 Markov codes do not beat uniformly random coding for deletions}\label{sec:Cdel-markov}

In this section we prove \cref{thm:Cdel-markov}, which we restate here.

\markov*

Fix $\alpha\in[0,1]$.
We prove \cref{thm:Cdel-markov} by identifying an explicit choice of $y\in\bits^{(1-\delta)n}$ such that
\begin{equation*}
    |\Bins_{\delta n}(y)\cap \cC| \geq 2^{\Omega(\eps n)}
\end{equation*}
with high probability over the sampling of $\cC$.
Controlling this intersection amounts to controlling $\Pr[y\preceq X]$, with $X$ sampled as in the theorem statement.
We accomplish this via Cramér's theorem (\cref{thm:cramer}).
Our choice of $y$ depends on whether $\alpha<1/2$ or $\alpha\geq 1/2$, and so we divide the proof into two cases.

\paragraph{The $\alpha<1/2$ case.}
First, suppose that $\alpha<1/2$.
In this case we will show that, with high probability, the alternating string $y=0101\dots 01\in\bits^{(1-\delta)n}$ will have a list of size $2^{\Omega(\eps n)}$ whenever $R>1-h(\delta)+\eps$.
To control this, we are interested in lower bounding $\Pr[y\preceq X]$,
where $X$ is sampled as described in \cref{thm:Cdel-markov}.
In their study of the capacity of the deletion channel, Diggavi and Grossglauser~\cite{DG06} gave a simple moment-based argument for upper bounding this expression.
Their argument, coupled with Cramér's theorem, also implies that the upper bound is asymptotically sharp as $n\to \infty$, as we formalize next.

\begin{lemma}\label{lem:lb-subseq}
    Fix any $\alpha,\delta\in(0,1/2)$.
    Then, for $y=0101\dots01\in\bits^{(1-\delta)n}$ we have
    \begin{equation*}
        \Pr[y\preceq X] \geq e^{-(C^\star(\alpha,\delta)+o(1))n},
    \end{equation*}
    where
    \begin{equation*}
        C^\star(\alpha,\delta)=\begin{cases}
            \ln(\delta/\alpha)-(1-\delta)\ln\left(\frac{\delta(1-\alpha)}{(1-\delta)\alpha}\right), &\textrm{ if $\delta<\alpha$,}\\
            0, &\textrm{ otherwise.}
        \end{cases}
    \end{equation*}
\end{lemma}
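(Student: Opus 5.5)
The plan is to reduce the event $\{y\preceq X\}$ to a statement about a sum of i.i.d.\ geometric random variables, and then to apply Cramér's theorem (\cref{thm:cramer}) together with \cref{lem:prop-LF}. Let $m=(1-\delta)n$ and $y=0101\dots01\in\bits^{m}$.

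\textbf{Step 1: greedy embedding.} As in the proof of \cref{prop:X-subseq-y}, $y\preceq X$ holds if and only if the greedy (leftmost) embedding of $y$ into $X$ succeeds within $n$ symbols. The embedding works as follows. The first symbol of $y$ is matched to the first $0$ of $X$. After matching a symbol $b$, the next symbol of $y$ is $1-b$, so it is matched to the first later position of $X$ carrying a different value. Let $T_1$ be the position of the first $0$, and let $T_j$ for $j\ge 2$ be the gap between consecutive matches.

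By the Markov property, $T_2,T_3,\dots$ are i.i.d.\ $\mathrm{Geom}(p)$ random variables on $\{1,2,\dots\}$ with $p=1-\alpha$. This is because at each step the chain flips with probability $1-\alpha$, independently of the past. Moreover, they are independent of $T_1$, and $T_1=1$ with probability $1/2$. Hence
\begin{equation*}
\Pr[y\preceq X]=\Pr\Big[\sum_{j=1}^{m}T_j\le n\Big]\ \ge\ \tfrac12\Pr\Big[\sum_{j=2}^{m}T_j\le n-1\Big].
\end{equation*}
Write $G_j=T_{j+1}$, so the right-hand side concerns the normalized sum $\frac{1}{m-1}\sum_{j=1}^{m-1}G_j$. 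The event in question contains the event that this average lies in $U_\eta=(-\infty,\frac{1}{1-\delta}-\eta)$ for any fixed $\eta>0$, once $n$ is large.

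\textbf{Step 2: the easy case $\delta\ge\alpha$.} Here $\E[G]=\frac{1}{1-\alpha}\le\frac{1}{1-\delta}$. If the inequality is strict, the law of large numbers shows the probability tends to $1$, giving exponent $0$. In the boundary case $\delta=\alpha$, I would instead use Cramér on $U_\eta$ and let $\eta\to0$ as in Step 3, noting that $I$ is continuous and $I(\E G)=0$.

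\textbf{Step 3: the main case $\delta<\alpha$.} \cref{thm:cramer} gives
\begin{equation*}
\liminf \tfrac1{m}\ln\Pr[\cdot]\ \ge\ -\inf_{a\in U_\eta}I(a).
\end{equation*}
The moment-generating function $M(\gamma)=\frac{pe^\gamma}{1-\alpha e^\gamma}$ is finite for all $\gamma<\ln(1/\alpha)$, and in particular on both sides of $0$. Therefore \cref{lem:prop-LF} applies: since $\frac{1}{1-\delta}-\eta<\E[G]$ for small $\eta$, the infimum equals $I\big(\frac{1}{1-\delta}-\eta\big)$. Since $I$ is convex and finite on $(1,\infty)$, it is continuous there, so letting $\eta\to 0$ slowly yields the exponent $(1-\delta)I\big(\frac1{1-\delta}\big)n$. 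The factor $\tfrac12$ and the shifts $m\to m-1$, $n\to n-1$ are absorbed into $o(1)$.

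\textbf{Step 4: computing the rate function.} A standard computation for the geometric distribution gives
\begin{equation*}
I(a)=a\ln a-(a-1)\ln(a-1)-\ln p-(a-1)\ln\alpha\qquad\text{for } a>1.
\end{equation*}
Substituting $a=\frac1{1-\delta}$ and $a-1=\frac{\delta}{1-\delta}$ into $(1-\delta)I(a)$ and simplifying should produce
\begin{equation*}
\ln(\delta/\alpha)-(1-\delta)\ln\frac{\delta(1-\alpha)}{(1-\delta)\alpha}=C^\star(\alpha,\delta).
\end{equation*}

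I expect the main obstacle to be this last step: the Fenchel–Legendre optimization and the algebra matching its output to $C^\star$ exactly. The optimizer satisfies $\alpha e^{\gamma}=1-1/a=\delta$, i.e.\ $\gamma=\ln(\delta/\alpha)<0$ precisely when $\delta<\alpha$, which is consistent with the case split in the statement. The probabilistic reduction in Step 1 is routine.
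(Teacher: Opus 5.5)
Your argument is the paper's proof essentially step for step. You condition on the first bit being $0$ (your event $T_1=1$), observe that the later gaps are i.i.d.\ geometric with parameter $1-\alpha$, and apply Cram\'er's theorem to a left half-line. You then use monotonicity and continuity of the rate function below the mean, and evaluate at the optimizer $\gamma=\ln(\delta/\alpha)$, which you identify correctly. The $\eta$-shift is harmless but unnecessary: an average below $\frac{1}{1-\delta}$ already forces the sum below $\frac{m-1}{1-\delta}=n-\frac{1}{1-\delta}<n-1$.

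The closed form you display for the rate function in Step 4 is wrong, however: the signs of its first two terms are reversed. Put $\alpha e^{\gamma}=\frac{a-1}{a}$ into $\gamma a-\ln M(\gamma)=\gamma(a-1)+\ln(1-\alpha e^{\gamma})-\ln p$. This gives
\begin{equation*}
I(a)=(a-1)\ln(a-1)-a\ln a-\ln p-(a-1)\ln\alpha .
\end{equation*}
Your version cannot be right, because it does not vanish at $a=\E[G]=1/p$. Substituting $a=\frac{1}{1-\delta}$ into your formula gives $-\delta\ln(\alpha\delta)-(1-\delta)\ln\big((1-\alpha)(1-\delta)\big)$. This exceeds $C^\star(\alpha,\delta)$ by $-2\delta\ln\delta-2(1-\delta)\ln(1-\delta)$, so the claimed simplification would fail as written.

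With the corrected formula, or by evaluating $\gamma a-\ln M(\gamma)$ directly at $\gamma=\ln(\delta/\alpha)$, you get
\begin{equation*}
(1-\delta)\,I\Big(\frac{1}{1-\delta}\Big)=\delta\ln\frac{\delta}{\alpha}+(1-\delta)\ln\frac{1-\delta}{1-\alpha}.
\end{equation*}
This equals $C^\star(\alpha,\delta)$ once you expand $\ln\frac{\delta(1-\alpha)}{(1-\delta)\alpha}$. With that fix the proof is complete.
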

\begin{proof}
    We will work under the conditioning that $X_1=0$.
    Since this happens with probability $1/2$, it only affects the $o(1)$ term in the exponent of the lower bound in the lemma statement.

    Let $N_j$ denote the number of bits of the unbounded random process $X_1,X_2,\dots,X_n,X_{n+1},\dots$ required to match the $j$-th bit of $y$ after the $(j-1)$-th bit of $y$ has been matched.
    Note that $N_1=1$ since we assume that $X$ always starts with a $0$.
    Furthermore, $N_2,\dots,N_m$, where $m=(1-\delta)n$, are i.i.d., and
    \begin{equation}\label{eq:sum-char}
        \Pr[y\preceq X] = \Pr[N_2+\cdots+N_m\leq n-1].
    \end{equation}
    Note that
    \begin{equation*}
        \Pr[N_2=i]=\alpha^{i-1}(1-\alpha),
    \end{equation*}
    and so the moment-generating function of $N_2$ is
    \begin{equation*}
        M(\gamma)=\E[e^{\gamma N_2}]=\frac{1-\alpha}{e^{-\gamma}-\alpha}
    \end{equation*}
    whenever $\gamma<\ln(1/\alpha)$, and $M(\gamma)=\infty$ otherwise.
    
    We now combine \cref{thm:cramer} with \cref{eq:sum-char} to conclude that
    \begin{align}
        \Pr[y\preceq X] &= \Pr[N_2+\cdots+N_m\leq n-1]\nonumber\\
        &\geq \Pr\left[N_2+\cdots+N_m\in\left(0, \frac{m-1}{1-\delta}\right)\right]\nonumber\\
        &\geq \exp\left(-(1+o(1))m\cdot \inf_{a\in\left(0,\frac{1}{1-\delta}\right)} I(a)\right)\nonumber\\
        &=\exp\left(-(1+o(1))n\cdot (1-\delta)\inf_{a\in\left(0,\frac{1}{1-\delta}\right)} I(a)\right)\label{eq:LBsubseq}
    \end{align}
    with
    \begin{equation*}
        I(a) = \sup_{\gamma\in \R}\left[\gamma a - \ln M(\gamma)\right]= \sup_{\gamma\in \R}\left[\gamma a - \ln\left(\frac{1-\alpha}{e^{-\gamma}-\alpha}\right)\right].
    \end{equation*}
    
    Then, we leverage \cref{lem:prop-LF}.
    Since $M(\gamma)<\infty$ for all $\gamma<\ln(1/\alpha)$ and $\ln(1/\alpha)>0$, and since $\frac{1}{1-\delta}<\E[N_2]=\frac{1}{1-\alpha}$ if and only if $\delta<\alpha$, we get
    \begin{equation}\label{eq:rate}
        \inf_{a\in\left(0,\frac{1}{1-\delta}\right)} I(a) =\begin{cases}
            I\left(\frac{1}{1-\delta}\right), &\textrm{ if $\delta<\alpha$,}\\
            0, &\textrm{ otherwise.}
        \end{cases}
    \end{equation}

    Finally, it is not hard to see that the expression inside the supremum in $I(\frac{1}{1-\delta})$ when $\delta<\alpha$ is maximized when $\gamma=\ln(\delta/\alpha)$, and algebraic manipulation yields
    \begin{equation}\label{eq:optrate}
        (1-\delta) \cdot I\left(\frac{1}{1-\delta}\right)=\ln(\delta/\alpha)-(1-\delta)\ln\left(\frac{\delta(1-\alpha)}{(1-\delta)\alpha}\right)
    \end{equation}
    in this case.
    Combining \cref{eq:LBsubseq,eq:rate,eq:optrate} concludes the proof.
\end{proof}

We now prove a universal upper bound on $C^\star(\alpha,\delta)$ over all $\alpha\in(0,1/2)$.
\begin{lemma}\label{lem:worsethanfullyrandom}
    Fix any $\alpha,\delta\in(0,1/2)$.
    Let $C^\star(\alpha,\delta)$ be the quantity from \cref{lem:lb-subseq}.
    Then,
    \begin{equation*}
        C^\star(\alpha,\delta)<(1-h(\delta))\cdot \ln 2.
    \end{equation*}
\end{lemma}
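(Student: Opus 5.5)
The plan is a direct computation: rewrite $C^\star(\alpha,\delta)$ as a Kullback--Leibler divergence and then use that this divergence is monotone in $\alpha$. There are two cases, following the definition of $C^\star$ in \cref{lem:lb-subseq}.

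\emph{Case $\delta\geq\alpha$.} Here $C^\star(\alpha,\delta)=0$. Since $\delta<1/2$, we have $h(\delta)<1$, so $(1-h(\delta))\ln 2>0=C^\star(\alpha,\delta)$. Nothing more is needed.

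\emph{Case $\delta<\alpha$.} First expand the logarithms in the formula for $C^\star(\alpha,\delta)$:
\begin{align*}
C^\star(\alpha,\delta) &= \ln\delta-\ln\alpha-(1-\delta)\ln\delta-(1-\delta)\ln(1-\alpha)+(1-\delta)\ln(1-\delta)+(1-\delta)\ln\alpha\\
&= \delta\ln\frac{\delta}{\alpha}+(1-\delta)\ln\frac{1-\delta}{1-\alpha}=:D(\delta\,\|\,\alpha),
\end{align*}
which is the binary KL divergence measured in nats. The target quantity has the same form:
\[
(1-h(\delta))\ln 2=\ln 2+\delta\ln\delta+(1-\delta)\ln(1-\delta)=D(\delta\,\|\,1/2).
\]
So the claim reduces to showing $D(\delta\|\alpha)<D(\delta\|1/2)$ whenever $\delta<\alpha<1/2$.

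For this, fix $\delta$ and differentiate in $\alpha$:
\[
\frac{\partial}{\partial\alpha}D(\delta\|\alpha)=-\frac{\delta}{\alpha}+\frac{1-\delta}{1-\alpha}=\frac{\alpha-\delta}{\alpha(1-\alpha)}.
\]
This is strictly positive for $\alpha\in(\delta,1)$. Hence $\alpha\mapsto D(\delta\|\alpha)$ is strictly increasing on $(\delta,1/2]$, and therefore $D(\delta\|\alpha)<D(\delta\|1/2)$ for every $\alpha\in(\delta,1/2)$. This gives the strict inequality in this case as well.

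The main obstacle is only the algebra identifying $C^\star$ as $D(\delta\|\alpha)$; once that is done, the monotonicity step is immediate from the derivative above.
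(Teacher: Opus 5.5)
Your proof is correct and takes the same route as the paper. Both arguments handle $\delta\ge\alpha$ trivially, then note that the two sides agree at $\alpha=1/2$ and that $\partial_\alpha C^\star(\alpha,\delta)=\frac{\alpha-\delta}{\alpha(1-\alpha)}>0$ for $\alpha>\delta$. Your rewriting of $C^\star(\alpha,\delta)$ as the binary divergence $D(\delta\,\|\,\alpha)$ in nats is a useful addition: it makes explicit the check that both sides coincide at $\alpha=1/2$, which the paper only asserts.
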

\begin{proof}
    The inequality is obvious when $\delta\geq \alpha$, and so we focus on the case $\delta<\alpha$.
    The two expressions coincide when $\alpha=1/2$, and so it suffices to show that
    $C^\star(\alpha,\delta)$ is increasing on $\alpha\in (\delta,1/2)$.
    To see this, note that the derivative of $C^\star(\alpha,\delta)$ with respect to $\alpha$ is
    \begin{equation*}
        \frac{\alpha-\delta}{\alpha(1-\alpha)},
    \end{equation*}
    which is positive when $\alpha>\delta$.
\end{proof}

We use \cref{lem:worsethanfullyrandom} to argue that $\alpha<1/2$ is always a worse choice than $\alpha=1/2$ (a uniformly random code).
\begin{theorem}\label{thm:markov-leq-half}
    Fix a fraction of deletions $\delta\in[0,1/2]$ and $\eps\in(0,h(\delta))$, and let $R=1-h(\delta)+\eps$.
    Suppose that $\cC\subseteq\bits^n$ is obtained by independently sampling $2^{Rn}$ codewords according to an order-$1$ Markov chain $X_1,\dots,X_n$ with $\Pr[X_1=1]=1/2$ and $\Pr[X_i=X_{i-1}]=\alpha$ for some $\alpha\in[0,1/2)$, and keeping repeated codewords.
    Then, for all sufficiently large $n$ and $y=0101\dots 01\in\bits^{(1-\delta)n}$ we have
    \begin{equation*}
    |\Bins_{\delta n}(y)\cap \cC| \geq 2^{\eps n-1}
    \end{equation*}
    with probability at least $1-e^{-2^{\eps n}/8}$ over the sampling of $\cC$.
\end{theorem}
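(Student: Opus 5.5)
The plan is to fix $y=0101\dots01\in\bits^{(1-\delta)n}$ and treat the count $|\Bins_{\delta n}(y)\cap\cC|$ (codewords counted with multiplicity) as a sum of $2^{Rn}$ i.i.d.\ Bernoulli indicators $\mathds{1}[y\preceq c]$, each with success probability $p=\Pr[y\preceq X]$. By \cref{lem:lb-subseq} (for $\alpha\in(0,1/2)$) we have $p\geq e^{-(C^\star(\alpha,\delta)+o(1))n}$. By \cref{lem:worsethanfullyrandom}, $C^\star(\alpha,\delta)<(1-h(\delta))\ln 2$, so for large $n$ we get $p\geq 2^{-(1-h(\delta))n}$, possibly up to a subexponential loss. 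Since $R=1-h(\delta)+\eps$, the expected count is $\mu=2^{Rn}p$, which I aim to show is at least $2^{\eps n}$.

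The strict inequality in \cref{lem:worsethanfullyrandom} should absorb the $o(1)$ term for fixed $\alpha,\delta$. The boundary cases need separate treatment. If $\delta\geq\alpha$, then $C^\star=0$ and $p\geq 2^{-o(n)}$. If $\alpha=0$, the chain is deterministically alternating, so $y\preceq X$ with probability $1$ (or $1/2$, conditioning on $X_1=0$). The endpoints $\delta=0$ and $\delta=1/2$ require a small check. At $\delta=0$ the claim is vacuous, since $\eps<h(0)$ is impossible. At $\delta=1/2$, $y$ has length $n/2$ and I would rely on $C^\star=0$ whenever $\delta\geq\alpha$, which holds because $\alpha<1/2$.

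The last step is concentration. A multiplicative Chernoff bound gives $\Pr[\text{count}<\mu/2]\leq e^{-\mu/8}$. If $\mu\geq 2^{\eps n}$, this yields count $\geq 2^{\eps n-1}$ with probability at least $1-e^{-2^{\eps n}/8}$, which is exactly the stated bound.

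The main obstacle is obtaining $\mu\geq 2^{\eps n}$ exactly, with no slack, given the $o(1)$ in \cref{lem:lb-subseq}. I plan to use the strict gap $\eta=(1-h(\delta))\ln 2-C^\star>0$ to dominate the $o(1)$ term, which gives $\mu\geq 2^{\eps n+\Omega(n)}$ in the interior cases. In the cases $\delta\geq\alpha$, the requirement $\eps<h(\delta)$ guarantees $R<1$, but there $\mu\geq 2^{Rn-o(n)}=2^{(1-h(\delta)+\eps-o(1))n}$. This is at least $2^{\eps n}$ for large $n$ provided $h(\delta)<1$; if $h(\delta)=1$, i.e.\ $\delta=1/2$, the margin is only the $o(n)$ term, and I would handle it by directly noting $p\geq 1/2$ when $\alpha<\delta$ is bounded away. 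Whether these endpoint cases go through cleanly is the part I am least sure about.
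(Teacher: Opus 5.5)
Your main argument is the paper's argument. You write the list size as a sum of $2^{Rn}$ i.i.d.\ indicators and lower bound $p=\Pr[y\preceq X]$ via \cref{lem:lb-subseq}. You use the strict inequality of \cref{lem:worsethanfullyrandom} to absorb the $o(1)$ and get $\mu\ge 2^{\eps n}$, and then apply $\Pr[\text{count}\le \mu/2]\le e^{-\mu/8}$. Your remarks on $\delta=0$ (vacuous) and $\alpha=0$ ($p=1$ once $\delta n\ge 1$) are correct. They are also more careful than the paper, whose lemmas are stated only for $\alpha,\delta\in(0,1/2)$.

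Your patch for $\delta=1/2$ does not work. There $R=\eps$, so $\mu=2^{\eps n}p\le 2^{\eps n}$. The multiplicative Chernoff bound $e^{-\mu/8}$ delivers the stated conclusion only if $\mu\ge 2^{\eps n}$, i.e.\ $p=1$. This fails for $\alpha>0$, since $X=00\dots0$ has positive probability. With only $p\ge 1/2$ you get a count of at least $2^{\eps n-2}$ with failure probability $e^{-2^{\eps n}/16}$, which is off by a factor $2$ in both places. The paper's proof has the same blind spot: its claim that $p\ge 2^{-Cn}$ with $C<1-h(\delta)$ would force $p>1$ at $\delta=1/2$.

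What does work at $\delta=1/2$ is the following.
\begin{itemize}
\item Let $T$ be the number of chain bits consumed by greedily embedding the alternating $y$ of length $n/2$.
\item $T$ is a sum of independent geometric variables with parameter $1-\alpha$ (plus possibly a deterministic first term). So $\E[T]\le \frac{n}{2(1-\alpha)}<n$ with a linear gap, and $\mathrm{Var}(T)=O(n)$.
\item By Chebyshev, $p=\Pr[T\le n]\to 1$.
\item Then use the relative-entropy form of Chernoff, $\Pr[\mathrm{Bin}(N,p)\le N/2]\le (4p(1-p))^{N/2}$ with $N=2^{\eps n}$. This is at most $e^{-N/8}$ as soon as $p\ge 3/4$.
\end{itemize}
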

\begin{proof}
    By \cref{lem:lb-subseq,lem:worsethanfullyrandom}, the probability that the alternating string $y=0101\dots 01\in\bits^{(1-\delta)n}$ is a subsequence of $X$ is, for $n$ large enough, at least $2^{-C n}$ for some $C<1-h(\delta)$.
    Let $L$ denote the number of codewords of $\cC$ that are supersequences of $y$.
    Then $\E[L]\geq 2^{Rn}\cdot 2^{-Cn}\geq 2^{\eps n}$.
    Therefore, by a Chernoff bound,
    \begin{equation*}
        \Pr\left[L\leq \frac{1}{2}\cdot 2^{\eps n}\right]\leq e^{-2^{\eps n}/8},
    \end{equation*}
    as desired.
\end{proof}

\paragraph{The $\alpha\geq 1/2$ case.}
In this case we will focus on the all-zeros string $y=00\dots 0\in\bits^{(1-\delta)n}$ and give a lower bound on $\Pr[y\preceq X]$.

\begin{lemma}\label{lem:lb-subseq2}
    Fix any $\alpha\in[1/2,1)$ and $\delta\in(0,1/2)$.
    Then, for $y=00\dots0\in\bits^{(1-\delta)n}$ we have
    \begin{equation*}
        \Pr[y\preceq X] \geq 2^{-(1-h(\delta)+o(1))n}.
    \end{equation*}
\end{lemma}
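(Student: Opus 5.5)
Write $m=(1-\delta)n$. The event $y=0^{m}\preceq X$ is exactly the event that $X$ contains at least $m$ zeros, so I would lower bound it by the probability that $X$ has exactly $\delta n$ ones. The tool is an explicit type-counting argument over run structures; I do not expect Cramér's theorem to be needed here. The reference point is $\alpha=1/2$: there $X$ is uniform and the probability is $\binom{n}{\delta n}2^{-n}=2^{-(1-h(\delta)+o(1))n}$. The task is to show that stickier chains ($\alpha>1/2$) can only help.

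\textbf{Step 1 (exact probability of a string).} For $x\in\bits^n$ with $r(x)$ runs,
\begin{equation*}
\Pr[X=x]=\tfrac12\,\alpha^{n-r(x)}(1-\alpha)^{r(x)-1}=\tfrac12\,\alpha^{n-1}\beta^{r(x)-1},\qquad \beta=\tfrac{1-\alpha}{\alpha}\in(0,1].
\end{equation*}
I restrict to strings that start and end with $0$, contain exactly $\delta n$ ones, and whose ones form exactly $j$ runs. Such a string has $r(x)=2j+1$ runs. The number of such strings is $N_j=\binom{\delta n-1}{j-1}\binom{m-1}{j}$: compose the ones into $j$ positive parts, and the zeros into $j+1$ positive parts. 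Hence
\begin{equation*}
\Pr[y\preceq X]\ \ge\ \max_j \tfrac12\,\alpha^{n-1}\beta^{2j}N_j .
\end{equation*}

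\textbf{Step 2 (exponent).} Set $j=\theta n$ with $0<\theta<\min(\delta,1-\delta)$. Up to $o(n)$, the base-$2$ logarithm of the right-hand side is $n\,g(\alpha,\theta)$, where
\begin{equation*}
g(\alpha,\theta)=\delta h(\theta/\delta)+(1-\delta)h\!\left(\tfrac{\theta}{1-\delta}\right)+\log\alpha+2\theta\log\tfrac{1-\alpha}{\alpha}.
\end{equation*}
Let $G(\alpha)=\sup_\theta g(\alpha,\theta)$. At $\alpha=1/2$, take $\theta=\delta(1-\delta)$. This is the typical number of one-runs, and by Vandermonde's identity $\binom{\delta n}{\theta n}\binom{m}{\theta n}$ is the dominant term of $\sum_j\binom{\delta n}{j}\binom{m}{j}=\binom{n}{\delta n}$. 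Hence $G(1/2)=h(\delta)-1$. It remains to show that $G$ is non-decreasing on $[1/2,1)$.

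\textbf{Step 3 (monotonicity, the main obstacle).} I would use the envelope theorem. Since $g$ is strictly concave in $\theta$, the maximizer $\theta^*(\alpha)$ is given by the stationarity condition
\begin{equation*}
\frac{(\delta-\theta)(1-\delta-\theta)}{\theta^2}=\frac{\alpha^2}{(1-\alpha)^2},
\end{equation*}
which reduces to a quadratic in $\theta$. At the maximizer,
\begin{equation*}
G'(\alpha)=\partial_\alpha g(\alpha,\theta^*)=\frac{1}{\ln 2}\cdot\frac{1-\alpha-2\theta^*}{\alpha(1-\alpha)},
\end{equation*}
so it suffices to prove $2\theta^*(\alpha)\le 1-\alpha$.

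From the stationarity condition, $\theta/(1-\alpha)$ equals $\sqrt{(\delta-\theta)(1-\delta-\theta)}/\alpha$. I would then compare $(\delta-\theta)(1-\delta-\theta)\le(1/2-\theta)^2$ (AM–GM) to get $\theta^*/(1-\alpha)\le(1/2-\theta^*)/\alpha$. Rearranged, this is $\theta^*\le(1-\alpha)/2$, as needed. This is the step where I expect the care to be needed: checking that the maximizer is interior, handling $\alpha$ close to $1$ where $\theta^*\to0$, and checking uniformity of the $o(1)$ terms for fixed $\alpha,\delta$.

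Given this, $G(\alpha)\ge G(1/2)=h(\delta)-1$ for all $\alpha\in[1/2,1)$. Choosing $j=\lfloor\theta^*(\alpha)n\rfloor$ in Step 1 then gives $\Pr[y\preceq X]\ge2^{-(1-h(\delta)+o(1))n}$. This is the statement, and it feeds into a Chernoff argument identical to the proof of \cref{thm:markov-leq-half}.
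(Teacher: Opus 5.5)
Your proof is correct, but it takes a different route from the paper.

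The paper conditions on $X_1=0$ and writes $\Pr[y\preceq X]=\Pr[N_2+\cdots+N_m\le n-1]$, where the $N_j$ are the i.i.d.\ waiting times between consecutive zeros. It applies Cramér's lower bound exactly as in \cref{lem:lb-subseq}. It then shows that, for every fixed $\gamma<\ln(1/\alpha)$, the integrand $\gamma-(1-\delta)\ln M(\gamma)$ of the rate function is non-increasing in $\alpha$. Since the rate is a supremum and only needs to go \emph{down}, this pointwise monotonicity is enough, and nothing about the optimizing $\gamma$ is needed.

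You instead compute $\Pr[X=x]=\frac12\alpha^{n-1}\beta^{r(x)-1}$ exactly and count strings with $\delta n$ ones by their run structure. This gives an explicit exponent $G(\alpha)=\sup_\theta g(\alpha,\theta)$ that has to go \emph{up}. Here pointwise monotonicity fails, since $\partial_\alpha g$ has the sign of $1-\alpha-2\theta$. So you genuinely need the envelope theorem at the maximizer, and your AM--GM bound $\theta^*\le(1-\alpha)/2$ is exactly what closes the argument.

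The regularity you flag is routine:
\begin{itemize}
\item For $\delta<1/2$, $g(\alpha,\cdot)$ is strictly concave on $(0,\delta)$, with $\partial_\theta g\to+\infty$ as $\theta\to 0^+$ and $\partial_\theta g\to-\infty$ as $\theta\to\delta^-$.
\item Hence $\theta^*(\alpha)$ is interior and unique, and it is $C^1$ in $\alpha$ by the implicit function theorem, which justifies $G'(\alpha)=\partial_\alpha g(\alpha,\theta^*(\alpha))$.
\item For fixed $\alpha<1$ we have $\theta^*>0$, so taking $j=\lfloor\theta^* n\rfloor$ costs only $O(\log n)$ in the exponent.
\end{itemize}

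What your route buys is a self-contained, elementary argument: only binomial and Stirling estimates, with no large-deviations theorem or moment-generating function, and explicit $O(\log n/n)$ error terms. What the paper's route buys is a uniform treatment with the $\alpha<1/2$ case and a one-line monotonicity step that needs no information about the optimizer.
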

\begin{proof}
    As in the $\alpha<1/2$ case, we condition on $X_1=0$ and write
\begin{equation*}
    \Pr[y\preceq X] = \Pr[N_2+\cdots+N_m\leq n-1],
\end{equation*}
where $N_j$ is the number of bits from the unbounded random process $X_1,X_2,\dots,X_n,X_{n+1},\dots$ needed to match the $j$-th bit of $y$ after matching the $(j-1)$-th bit of $y$.
The $N_j$'s are i.i.d.\ for $j\geq 2$ with distribution
\begin{equation*}
    \Pr[N_2=i]=\begin{cases}
        \alpha, &\textrm{ if $i=1$,}\\
        (1-\alpha)^2 \alpha^{i-2}, &\textrm{ if $i>1$,}
    \end{cases}
\end{equation*}
and moment-generating function
\begin{equation*}
    M(\gamma)=\E[e^{\gamma N_2}]= \frac{\alpha-e^\gamma(2\alpha-1)}{e^{-\gamma}-\alpha}
\end{equation*}
whenever $\gamma<\ln(1/\alpha)$, and $M(\gamma)=\infty$ otherwise.

Following the proof of \cref{lem:lb-subseq}, since $M(\gamma)<\infty$ for all $\gamma<\ln(1/\alpha)$ and $\ln(1/\alpha)>0$, and since $\frac{1}{1-\delta}<2=\E[N_2]$ because $\delta<1/2$ by hypothesis, we conclude that
    \begin{equation*}
        \Pr[y\preceq X] \geq \exp\left(-(1+o(1))n\cdot (1-\delta) I\left(\frac{1}{1-\delta}\right)\right),
    \end{equation*}
    where
    \begin{equation*}
        (1-\delta)I\left(\frac{1}{1-\delta}\right) =  \sup_{\gamma\in\R}\left[\gamma - (1-\delta)\ln\left(\frac{\alpha-e^\gamma(2\alpha-1)}{e^{-\gamma}-\alpha}\right)\right].
    \end{equation*}
    The derivative of the expression inside the supremum with respect to $\alpha$ is
    \begin{equation*}
        \frac{(1-\delta)(e^\gamma-1)^2}{(1-e^\gamma \alpha)(e^\gamma(2\alpha-1)-\alpha)},
    \end{equation*}
    which is non-positive when $\alpha\in(1/2,1)$, $\delta\in(0,1/2)$, and $\gamma< \ln(1/\alpha)$ (when $\gamma\geq \ln(1/\alpha)$ the expression inside the supremum is always $-\infty$, and so we can disregard it).
    Therefore, for fixed $\gamma< \ln(1/\alpha)$ and $\delta\in(0,1/2)$, the expression inside the supremum is non-increasing in $\alpha\in(1/2,1)$.
    This means that the maximum is achieved at $\alpha=1/2$, in which case the supremum over $\gamma\in \R$ is achieved at $\gamma =\ln(2\delta)$ and
    \begin{equation*}
        (1-\delta)I\left(\frac{1}{1-\delta}\right) = (1-h(\delta))\ln 2.
        \qedhere
    \end{equation*}

\end{proof}

Similarly to the first case, the next result states that $\alpha>1/2$ is always a worse choice than $\alpha=1/2$.
\begin{theorem}\label{thm:markov-geq-half}
    Fix a fraction of deletions $\delta\in[0,1/2]$ and $\eps\in(0,h(\delta))$, and let $R=1-h(\delta)+\eps$.
    Suppose that $\cC\subseteq\bits^n$ is obtained by independently sampling $2^{Rn}$ codewords according to an order-$1$ Markov chain $X_1,\dots,X_n$ with $\Pr[X_1=1]=1/2$ and $\Pr[X_i=X_{i-1}]=\alpha$ for some $\alpha\in(1/2,1)$, and keeping repeated codewords.
    Then, for all sufficiently large $n$ and $y=00\dots 0\in\bits^{(1-\delta)n}$ we have
    \begin{equation*}
    |\Bins_{\delta n}(y)\cap \cC| \geq 2^{\eps n -1}
    \end{equation*}
    with probability at least $1-e^{-2^{\eps n}/8}$ over the sampling of $\cC$.
\end{theorem}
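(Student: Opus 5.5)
The proof will mirror that of \cref{thm:markov-leq-half}, with \cref{lem:lb-subseq2} now playing the role of \cref{lem:lb-subseq,lem:worsethanfullyrandom}. Fix $y=00\dots0\in\bits^{(1-\delta)n}$ and let $X$ be the order-$1$ Markov chain with parameter $\alpha\in(1/2,1)$.

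For $\delta\in(0,1/2)$, \cref{lem:lb-subseq2} gives
\[
\Pr[y\preceq X]\geq 2^{-(1-h(\delta)+o(1))n}.
\]
The boundary values of $\delta$ need a separate remark. If $\delta=1/2$, then $1-h(\delta)=0$ and $R=\eps$. Here $\Pr[y\preceq X]\geq 2^{-o(1)n}$ follows by monotonicity: a length-$(1-\delta')n$ all-zeros string with $\delta'<1/2$ slightly contains the shorter one, so its probability lower bound transfers, and we let $\delta'\uparrow 1/2$. The case $\delta=0$ is excluded, because $\eps<h(0)=0$ is impossible.

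Next let $L=|\Bins_{\delta n}(y)\cap\cC|$. This is a sum of $2^{Rn}$ independent Bernoulli indicators, one per sampled codeword, each with success probability $p=\Pr[y\preceq X]$. Hence
\[
\E[L]=2^{Rn}p\geq 2^{(\eps-o(1))n}.
\]
This is where the slack must be handled carefully. Unlike the $\alpha<1/2$ case, \cref{lem:lb-subseq2} gives no strict gap $C<1-h(\delta)$; the $o(1)$ term may eat into $\eps$. The statement claims $L\geq 2^{\eps n-1}$, which needs $\E[L]\geq 2^{\eps n}$ exactly. My first fix would be to strengthen \cref{lem:lb-subseq2} to a strict inequality for $\alpha>1/2$. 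The derivative computed in its proof is strictly negative whenever $\gamma\neq 0$, and the optimizer $\gamma=\ln(2\delta)$ is nonzero for $\delta\neq1/2$. Therefore the supremum at $\alpha>1/2$ is strictly smaller than $(1-h(\delta))\ln 2$, giving $p\geq 2^{-Cn}$ for some constant $C<1-h(\delta)$ and all large $n$. Then $\E[L]\geq 2^{(\eps+(1-h(\delta)-C))n}\geq 2^{\eps n}$ for large $n$.

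To make this strictness rigorous, I would evaluate the expression inside the supremum at the $\alpha=1/2$ maximizer $\gamma_0=\ln(2\delta)<0$. The difference is bounded by the strictly negative derivative there, and the supremum at $\alpha$ is attained near a point where the same bound holds. Alternatively, it is cleaner to argue by contradiction: if the supremum at $\alpha$ equaled the one at $1/2$, then the function of $\gamma$ for $\alpha$ is pointwise at most the function for $1/2$, with strict inequality for $\gamma\ne0$. Its supremum is then either attained at some $\gamma\neq0$, where it is strictly smaller, or approached as $\gamma\to\ln(1/\alpha)$ or $-\infty$, where the value is $-\infty$ or is strictly less by continuity.

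Finally, the multiplicative Chernoff bound $\Pr[L\leq \E[L]/2]\leq e^{-\E[L]/8}\leq e^{-2^{\eps n}/8}$ gives $L\geq 2^{\eps n-1}$ with the claimed probability. The main obstacle is this strictness and uniform-constant issue. If the strictness argument proves delicate, I would settle for running the proof with $\eps/2$ in place of $\eps$, which still yields the $2^{\Omega(\eps n)}$ needed for \cref{thm:Cdel-markov}.
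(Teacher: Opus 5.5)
Your route is the paper's: take $y=0^{(1-\delta)n}$, lower-bound $p=\Pr[y\preceq X]$ by \cref{lem:lb-subseq2}, and finish with the Chernoff step of \cref{thm:markov-leq-half}. You are right that the paper's one-line ``analogous'' proof skips a step. \cref{lem:lb-subseq2} only gives $p\ge 2^{-(1-h(\delta)+o(1))n}$, hence $\E[L]\ge 2^{(\eps-o(1))n}$ rather than the $2^{\eps n}$ the Chernoff step uses. For $\alpha>1/2$ there is no counterpart of \cref{lem:worsethanfullyrandom} supplying a strict gap.

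Your second (contradiction) argument fills this correctly for $\delta\in(0,1/2)$. Write $f_\alpha(\gamma)=\gamma-(1-\delta)\ln M(\gamma)$. It is continuous on $(-\infty,\ln(1/\alpha))$ and tends to $-\infty$ at both ends; at $-\infty$ this is because $f_\alpha(\gamma)=\delta\gamma-(1-\delta)\ln\alpha+o(1)$ and $\delta>0$. So the supremum is attained at some $\gamma^\ast$. If $\gamma^\ast\neq 0$, integrate the strictly negative $\alpha$-derivative over $[1/2,\alpha]$ (where $\gamma^\ast$ stays in the domain) to get $f_\alpha(\gamma^\ast)<f_{1/2}(\gamma^\ast)\le(1-h(\delta))\ln 2$. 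The case $\gamma^\ast=0$, which you did not list, is harmless because $f_\alpha(0)=0<(1-h(\delta))\ln 2$. Your first, ``evaluate at $\gamma_0$'' version is too vague to stand alone; use the second.

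The endpoint $\delta=1/2$, however, is not handled. Monotonicity only yields $p\ge 2^{-o(n)}$. Your strictness argument does not apply there, since $\gamma_0=\ln(2\delta)=0$, so you never reach $\E[L]\ge 2^{\eps n}$.

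No argument can. At $\delta=1/2$, $y\preceq X$ iff $X$ has at least $n/2$ zeros, and the $0\leftrightarrow 1$ symmetry of the chain gives $p=\tfrac12\bigl(1+\Pr[\text{$X$ has exactly $n/2$ zeros}]\bigr)=\tfrac12+O(n^{-1/2})$. Then $L\sim\mathrm{Bin}(2^{\eps n},p)$ falls below $2^{\eps n-1}$ with probability at least $e^{-O(2^{\eps n}/n)}$. This is much larger than $e^{-2^{\eps n}/8}$, so the statement with these constants fails at $\delta=1/2$. The paper's proof, resting on \cref{lem:lb-subseq2}, also only covers $\delta\in(0,1/2)$.

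You should state that restriction instead of claiming the endpoint. If you want something at $\delta=1/2$, the bound $p\ge\tfrac12$ gives $L\ge 2^{\eps n-2}$ with probability at least $1-e^{-2^{\eps n}/16}$, which suffices for \cref{thm:Cdel-markov}.
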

\begin{proof}
    The proof is analogous to that of \cref{thm:markov-leq-half}, but instead we choose the all-zeros string $y=00\dots0\in\bits^{(1-\delta)n}$ and invoke \cref{lem:lb-subseq2} instead of \cref{lem:lb-subseq,lem:worsethanfullyrandom}.
\end{proof}

Combining \cref{thm:markov-leq-half,thm:markov-geq-half} immediately implies \cref{thm:Cdel-markov}.

\subsection{A sharp upper bound for small fraction of deletions}\label{sec:Cdel-UB-small}

In this section we prove \cref{thm:Cdel-asymp-UB-intro}, which we restate here.

\smalldel*

As already discussed in \cref{sec:contr-del}, this implies that 
\begin{equation*}
    \Cdel(\delta)=1-h(\delta)+o(\delta).
\end{equation*}
This matches the asymptotic behavior of the capacity of the deletion channel up to low-order terms \cite{KMS10,KM13}.

We start with the following simple claim which lower bounds the number of runs most codewords in a code must have given that the code has rate $R$.
\begin{claim} \label{clm:many-runs}
    Fix $\eps > 0$ and $\gamma\in (0,1/2)$, and let $n$ be large enough integer.
    Let $\cC$ be a code of length $n$ with rate $R = h(\gamma) + \eps$. 
    Then, at least half of the codewords in $\cC$ have at least $\gamma n$ runs.
\end{claim}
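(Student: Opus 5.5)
The plan is a direct counting argument: strings with few runs are rare, so a code of rate $R=h(\gamma)+\eps$ cannot have half its codewords among them. Concretely, I will bound the number of strings in $\bits^n$ with fewer than $\gamma n$ runs by $2^{(h(\gamma)+o(1))n}$. Since $|\cC|=2^{(h(\gamma)+\eps)n}$, this count is less than $|\cC|/2$ for large $n$, so fewer than half of the codewords can have fewer than $\gamma n$ runs. The conclusion then follows immediately.

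For the counting step, note that a string $x\in\bits^n$ is determined uniquely by its first bit $x_1$ together with the set of positions $i\in\{2,\dots,n\}$ where $x_i\neq x_{i-1}$. The number of runs $r(x)$ is one plus the number of such transition positions. Hence the number of strings with exactly $r$ runs is $2\binom{n-1}{r-1}$, and the number with fewer than $\gamma n$ runs is
\[
\sum_{r=1}^{\lceil\gamma n\rceil-1} 2\binom{n-1}{r-1}\leq 2\sum_{j=0}^{\gamma n}\binom{n}{j}\leq 2\cdot 2^{h(\gamma)n}.
\]
The last inequality is the standard entropy bound on the volume of a Hamming ball, $\sum_{j\le \gamma n}\binom{n}{j}\le 2^{h(\gamma)n}$, which is valid because $\gamma<1/2$.

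Finally, I compare this with the code size. If at least half of the codewords had fewer than $\gamma n$ runs, these would be distinct strings, so
\[
\tfrac12\, 2^{(h(\gamma)+\eps)n}\leq 2^{h(\gamma)n+1}.
\]
This forces $2^{\eps n}\le 4$, which is false once $n>2/\eps$.

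There is no real obstacle here. The only points requiring care are the hypothesis $\gamma<1/2$, which is needed for the entropy bound on binomial sums, and rounding issues with $\gamma n$ and $Rn$. Both are absorbed by taking $n$ large.
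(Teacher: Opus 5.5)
Your proof is correct and takes essentially the same route as the paper. Both count the strings with $r$ runs as $2\binom{n-1}{r-1}$, bound the number of low-run strings by $2^{(h(\gamma)+o(1))n}$, and compare this with $|\cC|=2^{(h(\gamma)+\eps)n}$. The differences are cosmetic: you use the Hamming-ball bound $\sum_{j\le\gamma n}\binom{n}{j}\le 2^{h(\gamma)n}$ to get the explicit threshold $n>2/\eps$, whereas the paper bounds the sum by $2\gamma n\binom{n}{\gamma n}$ and absorbs the polynomial factor into an $o(1)$ term. The paper also counts strings with at most $\gamma n$ runs, which yields the slightly stronger conclusion of at least $\gamma n+1$ runs; that is the form it later invokes, though your ``at least $\gamma n$'' would also suffice there.
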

\begin{proof}
    The total number of binary strings of length $n$ that contain exactly $r$ runs is $2\cdot \binom{n-1}{r-1}$. 
    To see this, note that we have two choices for the initial bit, and then we must select $r-1$ out of the $n-1$ remaining positions at which each new run begins.
    Consequently, the number of binary strings of length $n$ with at most $\gamma n$ runs is 
    \[
        \sum_{i=1}^{\gamma n} 2\cdot \binom{n-1}{i - 1} \leq 2\cdot \gamma n \cdot  \binom{n}{\gamma n} \leq 2^{n \cdot (h(\gamma) + o(1))}  ,
    \]
    where the first inequality follows by noting that $\gamma < 1/2$.
    
    Since $\cC$ has $2^{n(h(\gamma) + \varepsilon)}$ codewords, the number of codewords with at least $\gamma n +1$ runs in $\cC$ is at least 
    \[
    2^{n(h(\gamma) + \varepsilon)} - 2^{n(h(\gamma) + o(1))} = 2^{n (h(\gamma) + \varepsilon)} \left(1 - 2^{\frac{o(n)}{\varepsilon n}}\right) \geq \frac{1}{2} \cdot 2^{n (h(\gamma) + \varepsilon)}  ,
    \]
    where the inequality holds for all large enough $n$.
\end{proof}

We are now ready to prove \cref{thm:Cdel-asymp-UB-intro}.

\begin{proof}[Proof of \cref{thm:Cdel-asymp-UB-intro}]
    Fix a fraction of errors $\delta\in(0,1/20)$ and, with some hindsight, set $\gamma=\frac{1}{2}-\sqrt{\frac{h(\delta)\ln 2}{2}}$. 
    Let $\cC$ be an arbitrary code of rate $R$ such that
    \begin{equation} \label{eq:bound-1}
        R \geq \max\left( h(\gamma), 1 - \delta - \left(\gamma - \delta \right) h\left( \frac{\delta}{\gamma - \delta} \right) \right)+ \varepsilon.
    \end{equation}
    We will show that there exists $y\in \bits^{(1-\delta)n}$ such that 
    \[
        \left| \Bins_{\delta n} (y) \cap \cC \right| \geq 2^{(\varepsilon - o(1)) n}  .
    \]
    Since the rate of $\cC$ is at least $h(\gamma) + \varepsilon$, by \cref{clm:many-runs} there is $\cC' \subseteq \cC$ such that $|\cC'| \geq |\cC|/2$ and every codeword in $\cC'$ has at least $\gamma n + 1$ runs. 
    Fix any $c\in \cC'$. 
    By \cref{lem:lev-bound}, the number of subsequences of length $(1 - \delta) n$ of $c$ is at least 
    \[
    \binom{\gamma n - \delta n}{\delta n} \geq 2^{n\cdot (\gamma - \delta) h\left( \frac{\delta}{ \gamma - \delta} \right) - o(n)} ,
    \]
    since $\gamma-\delta>2\delta$ for $\delta < 1/20$.
    Therefore, the number of pairs $(c,y)\in \cC' \times \bits^{n-\delta n}$ such that $y\preceq c$ is at least
    \[
    2^{Rn - 1 + n(\gamma - \delta)h\left( \frac{\delta}{ \gamma - \delta}\right) - o(n)} = 2^{n \left(R +  (\gamma - \delta)h\left( \frac{\delta}{ \gamma - \delta} \right) - o(1) \right)} .
    \]
    Thus, by the pigeonhole principle, there exists $y^{\star}\in \bits ^{(1 - \delta)n}$ such that
    \begin{align*}
        \left| \Bins_{\delta n} (y^{\star}) \cap \cC \right| &= \left| \{ c \in \cC': y^{\star} \preceq c \} \right| \\ 
        &\geq \frac{1}{2^{(1 - \delta)n}} \cdot 2^{n \left(R +  (\gamma - \delta)h\left( \frac{\delta}{ \gamma - \delta} \right) - o(1) \right)}\\
        &= 2^{n \left(R - (1 - \delta) +   (\gamma - \delta)h\left( \frac{\delta}{ \gamma - \delta} \right) - o(1) \right)}\\
        &\geq 2^{n(\varepsilon - o(1))} .
    \end{align*}
    The last inequality follows from \cref{eq:bound-1}.

    We are left to show that when plugging $\gamma=\frac{1}{2}-\sqrt{\frac{h(\delta)\ln 2}{2}}$ in
    \cref{eq:bound-1}, we get the claimed rate. 
    First observe that since $h(1/2 - x)< 1 - 2x^2/\ln2$ for every $x\in (0,1/2)$, we get
    \[
        h(\gamma) = h\left(\frac{1}{2} - \sqrt{\frac{h(\delta)\ln 2}{2}} \right) < 1 - h(\delta)  .
    \]
    Now, suppose that the maximum in \cref{eq:bound-1} is strictly smaller than $1-h(\delta)$. By the argument above, this means that there is $R<1-h(\delta)$ such that all codes of rate $R$ have list size $2^{\Omega(\eps n)}$. 
    But this contradicts \cref{eq:Cdel-bounds}, which states that for any $R<1-h(\delta)$ there is a code of rate $R$ with constant list size. 
    Therefore, the maximum must be at least $1-h(\delta)$, and since the first term inside the maximum is smaller than $1-h(\delta)$, it follows that the maximum equals the second term.
    Therefore, under our choice of $\gamma$, \cref{eq:bound-1} becomes
    \[
        R \geq 1 - \delta - \left(\gamma - \delta \right) h\left( \frac{\delta}{\gamma - \delta} \right) + \varepsilon.
    \]
    Finally, it is not hard to check that
    \begin{equation*}
        1 - \delta - \left(\gamma - \delta \right) h\left( \frac{\delta}{\gamma - \delta} \right) = 1+\delta\log\delta - \frac{\delta}{\ln 2}+o(\delta) = 1-h(\delta)+o(\delta),
    \end{equation*}
    as desired. \qedhere
\end{proof}

\section*{Acknowledgements}

We thank Shu Liu, Ivan Tjuawinata, and Chaoping Xing for a useful and detailed discussion about their prior work~\cite{LTX21}.

D.\ Doron was supported in part by 
Israel Science Foundation grant \#857/25 and by NSF-BSF grant \#2022644.
J.\ Ribeiro was funded by the European Union (LESYNCH, 101218842). Views and opinions expressed are however those of the authors only and do not necessarily reflect those of the European Union or the European Research Council Executive Agency. Neither the European Union nor the granting authority can be held responsible for them.
J.\ Ribeiro was also funded by national funds through FCT – Fundação para a Ciência e a Tecnologia, I.P., and, when eligible, co-funded by EU funds under project/support UID/50008/2025 – Instituto de Telecomunicações, with DOI \href{https://doi.org/10.54499/UID/50008/2025}{10.54499/UID/50008/2025}.

\bibliographystyle{alpha}
\bibliography{refs}

\appendix

\section{Uniformly random coding lower bounds for list-decoding from insertions}\label{app:unif-random}

In this section we characterize the performance of uniformly random codes with respect to list-decoding from insertions, establishing in particular the lower bound in \cref{eq:Cins-LB-prior}.
We stress that this has limited novelty.
Haeupler, Shahrasbi, and Sudan~\cite[Theorem 1.7]{HSS18} prove a lower bound on the list-decoding capacity via uniformly random codes, but appeal to sub-optimal bounds on the size of deletion balls that lead to a worse lower bound.
The main goal of this section is simply to provide a streamlined, optimal analysis of the performance of uniformly random codes in list-decoding from insertions.

By a uniformly random code of rate $R$ we mean a code $\cC\subseteq\bits^n$ obtained by sampling each of its $2^{Rn}$ codewords independently and uniformly at random from $\bits^n$.
Note that there may be repeated codewords, and we see $\cC$ as a multiset.
This makes little difference since with probability at least $1-2^{-\Omega(n)}$ at most a small fraction of codewords of $\cC$ will be repeated, meaning that the effective size of $\cC$ as a set will be at least $2^{Rn-1}$ with high probability, and removing repeated codewords only makes the list sizes in list-decoding smaller.
We note also that other models of uniformly random codes have been considered in the literature, such as sampling $\cC$ as a uniformly random subset of $\bits^n$ of size $2^{Rn}$.
There is no difference between these models when it comes to performance for list-decoding.

The following theorem characterizes the performance of uniformly random codes in list-decoding from a $\delta$-fraction of insertions.
Informally, it states that length-$n$ uniformly random codes of rate smaller than $(1-h(\delta))\cdot \mathbf{1}_{\{\delta<1/2\}}$ have small list size with high probability, while uniformly random codes of larger rate will have exponentially large list size with high probability.
In particular, when $\delta>1/2$ a uniformly random code of positive rate will have exponentially large list size with high probability.
As we already discussed in \cref{sec:intro}, this contradicts some results claimed by Liu, Tjuawinata, and Xing via the analysis of uniformly random codes in list-decoding from insertions~\cite[Corollary 13 and Figure 2]{LTX21}.

\begin{theorem}\label{thm:random-LD-ins}
    Fix $\delta\in[0,1]$ and $\eps>0$.
    Then, a uniformly random code $\cC\subseteq\bits^n$ of rate $R<(1-h(\delta))\cdot \mathbf{1}_{\{\delta<1/2\}}-\eps$ will be $(\delta,L=\frac{2+\delta}{\eps})$-list-decodable from insertions with probability at least $1-2^{-\Omega(\eps n)}$.

    Furthermore, a uniformly random code $\cC\subseteq\bits^n$ of rate $R\geq (1-h(\delta))\cdot \mathbf{1}_{\{\delta<1/2\}}+\eps$ will not be $(\delta,L)$-list-decodable from insertions for any $L\leq 2^{\eps n/3}$ with probability at least $1-2^{-2^{\Omega(\eps n)}}$.
\end{theorem}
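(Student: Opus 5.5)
Both parts reduce to estimating $\Pr[X\preceq y]=|\Bdel_{\delta n}(y)|/2^n$ for uniform $X\in\bits^n$ and fixed $y\in\bits^{(1+\delta)n}$, combined with a union bound (first part) or a concentration bound for a single well-chosen $y$ (second part). The volume estimate I would use is the Hirschberg--Régnier bound: over all $y\in\bits^{(1+\delta)n}$,
\begin{equation*}
    \max_y |\Bdel_{\delta n}(y)| = 2^{n\cdot g(\delta)+o(n)}, \qquad g(\delta)=\begin{cases} h(\delta), & \delta<1/2,\\ 1, & \delta\geq 1/2.\end{cases}
\end{equation*}
The maximum is attained by the alternating string. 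The upper bound is immediate for $\delta\ge 1/2$ because the ball sits inside $\bits^n$; for $\delta<1/2$ I would take it from~\cite{HR02}, or from \cref{lem:lev-bound} with $r(y)\le(1+\delta)n$, which gives $\binom{(1+2\delta)n}{\delta n}$ and is too weak; hence \cite{HR02} is what is needed. For the lower bound I would not cite anything. Instead I would use the matching argument of \cref{lem:lb-subseq} with $\alpha=1/2$, applied to the alternating $y$ of length $(1+\delta)n$: a uniform $X$ embeds into $y$ iff the sum of $n$ i.i.d.\ waiting times, each $1$ or $2$ with probability $1/2$ (a symbol equal to the next one matches immediately, otherwise it skips one), is at most $(1+\delta)n$. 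Each waiting time has mean $3/2$, so for $\delta\ge1/2$ this probability tends to a constant, and for $\delta<1/2$ Cramér (\cref{thm:cramer}) gives probability $2^{-n(1-h(\delta))+o(n)}$, since the number of $2$'s must be at most $\delta n$ out of $n$.

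\emph{First part.} Take $2^{Rn}$ uniform codewords. The probability that some $y$ has $L+1$ codewords in its deletion ball is at most
\begin{equation*}
    2^{(1+\delta)n}\binom{2^{Rn}}{L+1}\left(2^{-n(1-g(\delta))+o(n)}\right)^{L+1}\le 2^{n(L+1)\left(R-(1-g(\delta))+\frac{1+\delta}{L+1}+o(1)\right)}.
\end{equation*}
With $R<1-g(\delta)-\eps$ (which equals $(1-h(\delta))\mathbf 1_{\{\delta<1/2\}}-\eps$) and $L+1\ge(2+\delta)/\eps$, the exponent is at most $-\Omega(\eps n)$; the slack between $1+\delta$ and $2+\delta$ absorbs the $o(1)$ terms. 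Repeated codewords are then handled as in the proof of \cref{thm:l-d-upb-ins}. This part is routine once the \cite{HR02} bound is in hand.

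\emph{Second part.} Fix $y$ to be alternating of length $(1+\delta)n$. By the Cramér computation above, $p=\Pr[X\preceq y]\ge 2^{-n(1-g(\delta))-o(n)}$. The list size $L_y$ is then a sum of $2^{Rn}$ independent Bernoulli$(p)$ variables with mean at least $2^{n(R-1+g(\delta))-o(n)}\ge 2^{\eps n-o(n)}$. A Chernoff bound gives $L_y\ge 2^{\eps n/2}>2^{\eps n/3}$ except with probability $e^{-\Omega(2^{\eps n-o(n)})}=2^{-2^{\Omega(\eps n)}}$. For $\delta\ge1/2$ the threshold is $R\ge\eps$ and $p=\Omega(1)$, so the same argument applies. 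If $\cC$ is read as a set, I would note that $L_y$ counts codeword multiplicities. The probability that a given codeword value appears at least twice is $O(2^{2Rn-n})$ per pair, so with the same double-exponential confidence at most half of the counted codewords can be duplicates. This is only needed when $R>1/2$, and it can be handled by counting distinct elements through a balls-into-bins tail bound.

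The main obstacle is the sharp upper bound on $\max_y|\Bdel_{\delta n}(y)|$ for $\delta<1/2$, which I take from~\cite{HR02}. I would also double-check that the exponent of their bound is exactly $h(\delta)$ at length $(1+\delta)n$ with $\delta n$ deletions, i.e.\ that it matches the Cramér lower bound. This should hold because the alternating string maximizes the ball size and its ball size is computed exactly by the Cramér argument.
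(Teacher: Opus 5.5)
Your proposal is correct and follows the paper's proof. The first part is a union bound using the Hirschberg--Régnier bound $|\Bdel_{\delta n}(y)|\le\sum_{i=0}^{\delta n}\binom{n}{i}$, and the second is a Chernoff bound at the alternating string. The only difference is that you obtain the alternating string's ball size by greedy matching instead of citing the equality case of \cite{HR02}; your matching argument already gives the exact value $2^{-n}\sum_{i=0}^{\delta n}\binom{n}{i}$ (the number of skips is $\mathrm{Bin}(n,1/2)$), so Cramér's theorem is unnecessary.
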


Intuitively, proving \cref{thm:random-LD-ins} boils down to controlling the size of the largest \emph{deletion ball} over strings of a given length.
The following result, due to Hirschberg and R\'egnier~\cite{HR02}, building on prior work of Calabi and Hartnett~\cite{CH69}, allows us to do that (see also the discussion in the introduction of~\cite{LL15}).
\begin{lemma}[{\cite[Corollary 2.7]{HR02}}]\label{lem:del-ball}
    For any $y\in\bits^m$ and number of deletions $t$ we have
    \begin{equation*}
        |\Bdel_{t}(y)|\leq \sum_{i=0}^t \binom{m-t}{i},
    \end{equation*}
    and equality is achieved when $y$ is an alternating string.\footnote{A string $y\in\bits^m$ is \emph{alternating} if it has $m$ runs. In particular, if $m$ is even and the alphabet is binary, then $y=0101\dots 01$ or $y=1010\dots10$. This lemma extends to larger alphabets too.}
    Recall that $\Bdel_{t}(y)$ is the set of all length-$(|y|-t)$ subsequences of $y$. 
\end{lemma}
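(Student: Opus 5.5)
We plan to prove the lemma by induction on the length $m$, using a first-symbol decomposition of the set of distinct subsequences. Throughout we work with the subsequence length $k=m-t$ rather than $t$. Let $D(y,k)$ denote the number of distinct length-$k$ subsequences of $y$, so that $|\Bdel_t(y)|=D(y,m-t)$. Define
\begin{equation*}
    g(m,k)=\sum_{i=0}^{m-k}\binom{k}{i}.
\end{equation*}
With this notation the lemma states that $D(y,k)\leq g(m,k)$ for every $y\in\bits^m$, with equality when $y$ is alternating.

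\textbf{Step 1: the recursion for $D$.} Write $y=b\circ z$ with $b\in\bits$, and let $\bar b$ be the complement of $b$. Split the length-$k$ subsequences of $y$ according to their first symbol.
\begin{itemize}
    \item A subsequence starting with $b$ can be matched greedily, with its first symbol at position $1$. So these subsequences are exactly $b\circ w$ with $w$ a length-$(k-1)$ subsequence of $z$. There are $D(z,k-1)$ of them.
    \item A subsequence starting with $\bar b$ is, by the same greedy argument, exactly $\bar b\circ w$ with $w$ a subsequence of the suffix $y^{(p)}$ of $y$ after position $p$. Here $p\geq 2$ is the position of the first $\bar b$ in $y$, and $y^{(p)}$ has length $m-p$. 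There are $D(y^{(p)},k-1)$ of them.
    \item If $\bar b$ does not occur in $y$, there are no subsequences of the second kind.
\end{itemize}
Hence
\begin{equation*}
    D(y,k)=D(z,k-1)+D(y^{(p)},k-1),
\end{equation*}
where the second term is $0$ if $\bar b$ does not occur in $y$.

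\textbf{Step 2: properties of $g$.} We check two facts.
\begin{itemize}
    \item $g(m',k)$ is non-decreasing in $m'$. This is immediate, since increasing $m'$ only adds nonnegative terms to the sum.
    \item $g$ satisfies the alternating recursion $g(m,k)=g(m-1,k-1)+g(m-2,k-1)$. To see this, write
    \begin{equation*}
        g(m-1,k-1)+g(m-2,k-1)=\sum_{i=0}^{m-k}\binom{k-1}{i}+\sum_{i=1}^{m-k}\binom{k-1}{i-1}.
    \end{equation*}
    Applying Pascal's rule term by term gives $\sum_{i=0}^{m-k}\binom{k}{i}=g(m,k)$.
\end{itemize}
The base cases are also consistent: $D(y,0)=1=g(m,0)$, and for $k>m$ both $D$ and $g$ are $0$ (the sum defining $g$ is empty).

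\textbf{Step 3: the inequality by induction.} By the induction hypothesis and monotonicity, using $m-p\leq m-2$,
\begin{equation*}
    D(y,k)\leq g(m-1,k-1)+g(m-p,k-1)\leq g(m-1,k-1)+g(m-2,k-1)=g(m,k).
\end{equation*}
When $\bar b$ does not occur, $y=b^m$ and $D(y,k)=1\leq g(m,k)$ for $k\leq m$.

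\textbf{Step 4: equality for alternating strings.} If $y$ is alternating, then $p=2$, and both $z$ and $y^{(2)}$ are again alternating. Every inequality in Step 3 is therefore an equality, and induction gives $D(y,k)=g(m,k)$.

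The main point requiring care is Step 1: the count must be of \emph{distinct} subsequences, not embeddings. The greedy leftmost-matching argument is what makes the two classes disjoint and makes each class correspond bijectively to the subsequences of a single shorter string. The remaining points are the small-length edge cases, where the index ranges of the binomial sums must be checked.
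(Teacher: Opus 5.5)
Your proof is correct and complete. The first-occurrence decomposition $D(y,k)=D(z,k-1)+D(y^{(p)},k-1)$ counts distinct subsequences correctly. Monotonicity of $g$ in its first argument, together with $m-p\le m-2$, gives the bound. The identity $g(m,k)=g(m-1,k-1)+g(m-2,k-1)$ is exactly the alternating-string recursion. One edge case goes unstated: for $m=1$ in Step 4 there is no position $p$, but it is covered by your $y=b^m$ case, since $g(1,k)=1$ for $k\le 1$.

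The paper itself gives no proof and simply imports the statement from \cite{HR02}. Your argument is essentially the standard one behind that citation: bound the suffix lengths after the first occurrence of each symbol and compare with the alternating string. Here it is specialized to the binary alphabet, with the closed form checked directly via Pascal's rule.
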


\begin{proof}[Proof of \cref{thm:random-LD-ins}]
    To see the first part of the theorem statement we can focus on $\delta<1/2$.
    By taking a union bound over all the $\binom{2^{Rn}}{L+1}\leq 2^{Rn(L+1)}$ possible choices of $L+1$ codewords and strings $y$ of length $(1+\delta)n$, we get that the probability that $\cC$ is not $(\delta,L)$-list-decodable from insertions is at most
    \begin{equation*}
        2^{Rn(L+1)}\sum_{y\in\bits^{(1+\delta)n}} \left(\frac{|\Bdel_{\delta n}(y)|}{2^n}\right)^{L+1}.
    \end{equation*}
    By \cref{lem:del-ball}, for $\delta< 1/2$ we can upper bound this probability by
    \begin{align*}
        2^{Rn(L+1)} \sum_{y\in\bits^{(1+\delta)n}} \left(2^{-n}\sum_{i=0}^{\delta n} \binom{n}{i}\right)^{L+1} &= 2^{Rn(L+1)} \cdot 2^{(1+\delta)n} \left(2^{-n}\sum_{i=0}^{\delta n} \binom{n}{i}\right)^{L+1}\\
        &\leq 2^{Rn(L+1)} \cdot 2^{(1+\delta)n} \cdot 2^{-n (1-h(\delta))(L+1)},
    \end{align*}
    and the final upper bound is smaller than $1$ when $R=1-h(\delta)-\eps$ and $L= \frac{2+\delta}{\eps}$.

    To see the second part of the theorem statement, fix $\eps>0$ and let $\cC$ be a uniformly random code of rate $R=(1-h(\delta))\cdot \mathbf{1}_{\{\delta<1/2\}}+\eps$.
    By \cref{lem:del-ball}, an alternating string $y$ of length $(1+\delta)n$ has exactly
    \begin{equation*}
        \sum_{i=0}^{\delta n}\binom{n}{i}
    \end{equation*}
    length-$n$ subsequences.
    This number is at least $2^{(1-o(1))n h(\delta)}$ when $\delta<1/2$ and at least $2^{n-1}$ when $\delta\geq 1/2$.
    Therefore, the expected number of subsequences of $y$ that become codewords of $\cC$ is at least $2^{\eps n/2}$ for all sufficiently large $n$, and a Chernoff bound gives that this number will be at least $2^{\eps n/3}$ with probability at least $1-2^{-2^{\Omega(\eps n)}}$.
\end{proof}

\end{document}